\documentclass{wscpaperproc}
\usepackage{latexsym}
\usepackage{graphicx}
\usepackage{mathptmx}
\usepackage[T1]{fontenc}

\usepackage{amsmath}
\usepackage{amsfonts}
\usepackage{amssymb}
\usepackage{amsbsy}
\usepackage{amsthm}

\usepackage[skip=0pt]{caption}

\usepackage{subcaption}
\usepackage{diagbox}
\def\R{{\mathbb{R}}}   

\def\N{{\mathbb{N}}}
\def\E{{\mathbb{E}}}
\def\P{{\mathbb P}}
\def\1{{\mathbf{1}}}

\def\B{{\mathcal B}}
\def\HH{{\mathbb H}}   
\def\sa{{\mathcal K}}

\usepackage[pdftex,colorlinks=true,urlcolor=blue,citecolor=black,anchorcolor=black,linkcolor=black]{hyperref}

\newtheoremstyle{wsc}
{3pt}
{3pt}
{}
{}
{\bf}
{}
{.5em}
{}

\theoremstyle{wsc}
\newtheorem{theorem}{Theorem}

\newtheorem{definition}{Definition}

\newtheorem{assumption}{Assumption}

\newtheorem{lemma}{Lemma}

\usepackage{cleveref}

\crefname{assumption}{assumption}{assumptions}

\begin{document}

%
%

\pagestyle{fancyplain}

\thispagestyle{plain}
\firstPageHead{}

\chead{\fancyplain{}{\itshape Ren, Fu, and Marcus}}

\rhead{}
\cfoot{}
\renewcommand{\headrulewidth}{0pt} 

\makeatletter
\let\@internalcite\cite
\def\cite{\def\@citeseppen{-1000}%
    \def\@cite##1##2{(##1\if@tempswa , ##2\fi)}%
    \def\citeauthoryear##1##2##3{##1 ##3}\@internalcite}
\def\citeNP{\def\@citeseppen{-1000}%
    \def\@cite##1##2{##1\if@tempswa , ##2\fi}%
    \def\citeauthoryear##1##2##3{##1 ##3}\@internalcite}
\def\citeN{\def\@citeseppen{-1000}%
    \def\@cite##1##2{##1\if@tempswa, ##2)\else{}\fi}%
    \def\citeauthoryear##1##2##3{##1 (##3)}\@citedata}
\def\citeA{\def\@citeseppen{-1000}%
    \def\@cite##1##2{(##1\if@tempswa , ##2\fi)}%
    \def\citeauthoryear##1##2##3{##1}\@internalcite}
\def\citeANP{\def\@citeseppen{-1000}%
    \def\@cite##1##2{##1\if@tempswa , ##2\fi}%
    \def\citeauthoryear##1##2##3{##1}\@internalcite}
\def\shortcite{\def\@citeseppen{-1000}%
    \def\@cite##1##2{(##1\if@tempswa , ##2\fi)}%
    \def\citeauthoryear##1##2##3{##2 ##3}\@internalcite}
\def\shortciteNP{\def\@citeseppen{-1000}%
    \def\@cite##1##2{##1\if@tempswa , ##2\fi}%
    \def\citeauthoryear##1##2##3{##2 ##3}\@internalcite}
\def\shortciteN{\def\@citeseppen{-1000}%
    \def\@cite##1##2{##1\if@tempswa, ##2\else{}\fi}%
    \def\citeauthoryear##1##2##3{##2 (##3)}\@citedata}
\def\shortciteA{\def\@citeseppen{-1000}%
    \def\@cite##1##2{(##1\if@tempswa , ##2\fi)}%
    \def\citeauthoryear##1##2##3{##2}\@internalcite}
\def\shortciteANP{\def\@citeseppen{-1000}%
    \def\@cite##1##2{##1\if@tempswa , ##2\fi}%
    \def\citeauthoryear##1##2##3{##2}\@internalcite}
\def\citeyear{\def\@citeseppen{-1000}%
    \def\@cite##1##2{(##1\if@tempswa , ##2\fi)}%
    \def\citeauthoryear##1##2##3{##3}\@citedata}
\def\citeyearNP{\def\@citeseppen{-1000}%
    \def\@cite##1##2{##1\if@tempswa , ##2\fi}%
    \def\citeauthoryear##1##2##3{##3}\@citedata}
%
%
%
\def\@citedata{%
    \@ifnextchar [{\@tempswatrue\@citedatax}%
                  {\@tempswafalse\@citedatax[]}%
}

\def\@citedatax[#1]#2{%
\if@filesw\immediate\write\@auxout{\string\citation{#2}}\fi%
  \def\@citea{}\@cite{\@for\@citeb:=#2\do%
    {\@citea\def\@citea{, }\@ifundefined
       {b@\@citeb}{{\bf ?}%
       \@warning{Citation `\@citeb' on page \thepage \space undefined}}%
{\csname b@\@citeb\endcsname}}}{#1}}%

%
\def\@citex[#1]#2{%
\if@filesw\immediate\write\@auxout{\string\citation{#2}}\fi%
  \def\@citea{}\@cite{\@for\@citeb:=#2\do%
    {\@citea\def\@citea{; }\@ifundefined
       {b@\@citeb}{{\bf ?}%
       \@warning{Citation `\@citeb' on page \thepage \space undefined}}%
{\csname b@\@citeb\endcsname}}}{#1}}%

%
\def\@biblabel#1{}
\makeatother



\newdimen\bibindent
\bibindent=0.0em
\def\thebibliography#1{\section*{\refname}\list
   {}{\settowidth\labelwidth{[#1]}
   \leftmargin\parindent
   \itemindent -\parindent
   \listparindent \itemindent
   \itemsep 0pt
   \parsep 0pt}
   \def\newblock{}
   \sloppy
   \sfcode`\.=1000\relax}


\setlength{\baselineskip}{12.7pt}

\title{Sensitivity analysis for stopping criteria \\ with application to organ transplantations}

\author{Xingyu Ren\\[12pt]
	Department of Electrical and Computer Engineering\\ 
        Institute for System Research\\
	University of Maryland\\
	College Park, MD 20742, USA\\
\and
Michael C. Fu\\[12pt]
	Robert H. Smith School of Business\\ 
        Institute for System Research\\
	University of Maryland\\
	College Park, MD 20742, USA\\
\and
Steven I. Marcus\\ [12pt]
	Department of Electrical and Computer Engineering\\ 
        Institute for System Research\\
	University of Maryland\\
	College Park, MD 20742, USA\\
}

\maketitle

\section*{ABSTRACT}
We consider a stopping problem and its application to the decision-making process regarding the optimal timing of organ transplantation for individual patients. At each decision period, the patient state is inspected and a decision is made whether to transplant. If the organ is transplanted, the process terminates; otherwise, the process continues until a transplant happens or the patient dies. Under suitable conditions, we show that there exists a control limit optimal policy. We propose a smoothed perturbation analysis (SPA) estimator for the gradient of the total expected discounted reward with respect to the control limit. Moreover, we show that the SPA estimator is asymptotically unbiased.

\section{INTRODUCTION}\label{sec:intro}
This paper is motivated by a kidney transplantation decision-making problem. We consider an end-stage kidney disease (ESKD) patient with a directed living-donor. We assume that the patient is always eligible for transplantation (before they die), and the living-donor organ has a fixed quality and is always available to the patient over the entire decision process. At each decision period, for example, every week or month, the patient health state is inspected and updated, and the decision is whether to transplant depending on the patient health. If the decision is to transplant, the patient receives a \textit{terminal} post-transplantation reward summarizing all the short-term and long-term effect of the transplantation, and the process terminates; otherwise, the patient receives a \textit{intermediate} pre-transplantation reward, and the process continues until a transplantation happens or the patient dies. The goal is to find a policy to maximize the total discounted expected reward. Commonly-used rewards include total discounted expected life years or total discounted quality-adjusted life years (QALYs) \cite{prieto2003problems}.

We propose a Markov decision process (MDP) model \cite{bertsekas2020dynamic} to study this problem, which falls into a special class of MDP models called optimal stopping problems. This type of MDP model has been applied to both liver transplantation \shortcite{alagoz2004optimal,alagoz2007determining,alagoz2007choosing,alagoz2010markov,kaufman2017living,batun2018optimal} and kidney transplantation \shortcite{david1985,bendersky2016deciding,fan2020optimal,ren2022optimal}. Previous work focuses on proving the existence of control limit-type optimal policies. Then, solving MDP problems could be translated into finding an optimal partition of the state space, where each region in the partition is assigned an action; in the scalar-state case, instead of a partition, there is just a single threshold or control limit. Control limit-type policies are important even if they are suboptimal, because they are easy to implement. In general, however, finding an analytic expression for the optimal control limit is difficult. Dynamic programming, one of the most powerful methodologies to solve MDP problems, suffers from the “curse of dimensionality” when the state space or action space is large or even uncountable. In this case, gradient-based optimization methods offer an alternative approach. Moreover, previous work, such as \shortciteN{ren2022optimal}, focuses on the setting of finite state space, while gradient-based methods can be used to solve problems of continuous state spaces. To apply gradient-based optimization methods, one has to compute the gradient of the total expected discounted reward with respect to (w.r.t.) the control limit, where finding an analytic solution is also hard.

In this paper, we focus on estimating the gradient of the total expected discounted reward with respect to the control limit through smoothed perturbation analysis (SPA), a simulation-based method. This is the initial phase of optimization, and fully solving the entire optimization problem will be the focus of future research. The rest of the paper is organized as follows. \Cref{sec:probset} formulates the individual patient organ transplantation problem as a discrete-time, infinite-horizon, \textit{continuous} state space MDP. In \Cref{sec:control}, under suitable conditions, we show the existence of the control limit optimal policy. In \Cref{sec:spa}, we propose an SPA estimator \shortcite{fu2012conditional} for the gradient of the total expected discounted reward w.r.t. the control limit. Moreover, we show that the SPA estimator is asymptotically unbiased. \Cref{sec:num} reports simulation results illustrating the effectiveness of the SPA estimator. The last section offers conclusion and future research directions.

\section{PROBLEM SETTING}\label{sec:probset}
In this section, we introduce components of the MDP model. The set of \textit{decision periods} is the natural numbers $\N=\{0,1,2,\cdots\}$.

Denote the health state of the patient by $h_n\in S_H:=[0,H]$, where a larger value implies worse health. We use an interval $[H_D,H]$, $H_D\in (0,H)$, to represent the death of the patient, i.e., if the patient state $h_n$ is greater than $H_D$, the patient is deceased. (Representing death as an interval rather than a singleton enables the stochastic kernel for the patient state Markov chain to be expressed using density functions, enhancing clarity in deriving structural results.)

Denote the \textit{post-transplantation state} by $P$. The MDP will transition into the absorbing state $P$ if the transplantation happens. The \textit{state space} of the MDP is $\mathcal S=S_H\bigcup\{P\}$, i.e., at each period $n$, the state of the MDP $s_n$ is either a scalar patient state $h_n$, or the post-transplantation state $P$.

Denote the \textit{action} by $a_n$. For each $n\in\N$, $a_n\in \mathcal A=\{W,T\}$ where $\mathcal A$ is the action space including
\begin{itemize}
    \item $W$: \textit{wait} for one more period;
    \item $T$: accept the kidney for \textit{transplantation}.
\end{itemize}
The set of state-action pairs $\sa:=\{(s,a) \ | \ s\in \mathcal S,~ a\in \mathcal A\}$ consists of four mutually disjoint regions, i.e., $\sa=\bigcup_{i=1}^4 K_i$, where $K_1=S_H \times \{W\}$, $K_2=S_H\times \{T\}$, $K_3=(P,W)$, $K_4=(P, T)$.

The \textit{Dynamics} of the MDP is defined as follows: 
\begin{itemize}
    \item If action $W$ is taken, the transplant doesn't happen, and the patient will wait until the next decision period. The patient state evolves according to the Markov transition kernel $\HH(\cdot|\cdot):\B(S_H)\times S_H \mapsto[0,1]$, where $\B(S_H)$ is the collection of Borel subsets of $S_H$. Specifically, given any current patient state $h_n\in S_H$ and any $B\in \B(S_H)$, at the next period, the patient state $h_{n+1}$ will take a value in $B$ with probability (w.p.) $\int_{B} \HH(dh|h_n)$, where $\HH(dh|h_n)$ is a probability measure on measurable space $(S_H,\B(S_H))$.
    \item The transition kernel $\HH$ satisfies the property that once the patient state enters $[H_D,H]$, i.e., the patient dies, they will stay in $[H_D,H]$, and the decision process terminates. In other words, $[H_D,H]$ is an absorbing terminal inverval, i.e., $\int_{H_D}^H\HH(dh|h_n)=1,~\forall h_n\in [H_D,H]$.
    \item If action $T$ is chosen, the state transitions into the absorbing state $P$, the decision process terminates.
\end{itemize}
The general transition kernel $\mathbb{S}:\mathcal{B}(\mathcal{S})\times \sa\mapsto[0,1]$ of the MDP is summarized as follows: for any $n\in \N$, $B\in \mathcal{B}(S_H)$, $h_n\in S_H$,
\begin{align*}
    &\mathbb{S}(s_{n+1}=P \ | \ s_n=P,a_n)=1,\\
    &\mathbb{S}(s_{n+1}\in B  \ | \ s_n=h_n,a_n=W)=\int_{B} \HH(dh|h_n),\\
    &\mathbb{S}(s_{n+1}=P \ | \ s_n=h_n,a_n=T)=1.
\end{align*}

\textit{Reward functions} are defined as follows: given patient state $h_n\in S_H$,
\begin{itemize}
    \item If action $W$ is chosen, an \textit{intermediate pre-transplantation reward} $c(h_n)$ is granted for being alive for one period, where $c(\cdot): S_H \mapsto \R_+$;
    \item If action $T$ is chosen, the patient receives a \textit{terminal post-transplantation reward} $r(h_n)$ that evaluates both the short-term and long-term effect of the transplantation, where $r(\cdot): S_H \mapsto \R_+$. 
\end{itemize}
For $h_n\in[H_D,H]$, i.e., when the patient is deceased, we set $c(h_n)=r(h_n)=0$. The one-stage reward of the MDP $g(\cdot,\cdot):S_H\times \mathcal A \mapsto \R^+$ is given by
\begin{align*}
    g(h,a)=\begin{cases} r(h) & a=T, \\ c(h) &a=W. \end{cases}
\end{align*}

The \textit{objective} is to find a stationary policy $\pi:S_H \mapsto \mathcal A$ maximizing the total discounted expected reward (also known as the value function)
\begin{align*}
    V_\pi(h)=\E(\sum_{k=0}^\infty \lambda^k g(h_k,\pi(h_k)) | h_0=h ),~\forall h\in S_H ,
\end{align*}
where $\lambda\in(0,1)$ is a discount factor. We define the maximum of the total discounted expected reward
\begin{align*}
    V (h)= \max_{\pi\in\Pi} V_\pi(h), ~\forall h\in S_H,
\end{align*}
where $\Pi$ is the set of stationary policies.

\section{CONTROL LIMIT POLICY}\label{sec:control}
In this section, we will show the existence of a control limit optimal policy under suitable conditions, which further expand upon the results of \shortciteN{alagoz2004optimal} to an MDP with a continuous state space. First, we will present several assumptions and preliminary results. 
\begin{assumption} \label{mono-r-con}
Both reward functions $c: S_H \mapsto \R_+$ and $r: S_H \mapsto \R_+$ are continuous and nonincreasing.
\end{assumption}
It follows that for any fixed $a\in \mathcal A$, the one-stage reward $g(s,a)$ is nonincreasing and continuous on $\mathcal S$. Moreover, $g$ is bounded on $\sa$. Because of boundedness of $g$, for any policy $\pi$, its value function $V_\pi(h)$ is bounded on $S_H$.

\begin{definition}(Strong continuity or strong Feller property) 
Let $X,Y$ by Borel spaces. A stochastic kernel $\mathbb T : \mathcal{B}(X)\times Y\mapsto[0,1]$ is said to be strongly continuous if the function $y\mapsto \int v(x)\mathbb T(dx|y)$ belongs to $C_b(Y)$, the set of bounded continuous functions on $Y$, whenever $v\in M_b(X)$, the set of bounded measurable functions on $X$.
\end{definition}

To establish the Bellman optimality condition and value iteration algorithm, we first need to show the strong continuity of the kernel $\mathbb{S}$. We assume the following regularity conditions:

\begin{assumption} \label{dens-h-con} For any $h\in S_H$, the measure $\HH(\cdot|h)$ admits a density $f_\HH(\cdot|h):S_H\mapsto \R_+$, satisfying the following conditions:
\begin{itemize}
    \item For any fixed $h'\in S_H$,  $f_\HH(h'|h)$ is continuous in $h$.
    \item $f_\HH$ is uniformly bounded, i.e., there exists $M>0$ such that for any $(h',h)\in [0,H]^2$, $f_\HH(h'|h)<M$.
\end{itemize}
\end{assumption}

\begin{lemma}
The transition kernel $\mathbb{S}$ is strongly continuous, i.e., for every $u\in M_b(\mathcal 
S)$, $v(s,a)=\int_{S} u(s')\mathbb{S}(ds'|s,a)$ is continuous and bounded on $\sa$.
\end{lemma}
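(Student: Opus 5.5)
The plan is to reduce strong continuity of $\mathbb{S}$ to continuity of $h\mapsto\int_0^H u(h')f_\HH(h'|h)\,dh'$, and to prove that by dominated convergence using \Cref{dens-h-con}.

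First fix the topology on $\sa$. The state space $\mathcal S=S_H\cup\{P\}$ carries the topology in which $P$ is an isolated point. Hence $\sa=\bigcup_{i=1}^4K_i$ is a disjoint union of the clopen pieces $K_1=S_H\times\{W\}$, $K_2=S_H\times\{T\}$ and the singletons $K_3,K_4$. A function on $\sa$ is therefore continuous if and only if its restriction to each $K_i$ is continuous, so it suffices to check each piece separately.

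Fix $u\in M_b(\mathcal S)$ with $|u|\le \|u\|_\infty$. Boundedness of $v$ is immediate, since $\mathbb{S}(\cdot|s,a)$ is a probability measure and so $|v|\le\|u\|_\infty$. On $K_3$ and $K_4$ the set is a single point, so continuity is trivial. On $K_2$ the dynamics give $v(h,T)=u(P)$, which is constant in $h$ and hence continuous.

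The only substantive case is $K_1$. There $v(h,W)=\int_{S_H}u(h')\HH(dh'|h)=\int_0^H u(h')f_\HH(h'|h)\,dh'$ by \Cref{dens-h-con}. Take a sequence $h_k\to h$ in $S_H$. For each fixed $h'$, the integrand $u(h')f_\HH(h'|h_k)$ converges to $u(h')f_\HH(h'|h)$, because $f_\HH(h'|\cdot)$ is continuous. The integrands are dominated by the constant $\|u\|_\infty M$, which is integrable on the bounded interval $[0,H]$. Dominated convergence therefore gives $v(h_k,W)\to v(h,W)$. Since $S_H$ is a metric space, sequential continuity is enough.

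The only point needing care is this $K_1$ step. The key issue is that $u$ is merely measurable, so continuity has to come from the kernel rather than from $u$. Pointwise continuity of the density in the conditioning variable, together with the uniform bound $M$, supplies exactly the dominated-convergence hypotheses. (Alternatively, Scheffé's lemma gives $L^1$ convergence of the densities, hence total-variation convergence of the measures.) I do not expect any real obstacle beyond stating the topology on $\sa$ correctly.
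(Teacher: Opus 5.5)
Your proposal is correct and follows the paper's proof: the paper also disposes of $K_2,K_3,K_4$ because the kernel there is a point mass at the absorbing state $P$. It handles $K_1$ the same way, writing $v(h)=\int_0^H u(h')f_\HH(h'|h)\,dh'$ and applying dominated convergence, with domination coming from boundedness of $u$ and the uniform bound $M$ on $f_\HH$. Your explicit choice of topology on $\sa$ ($P$ isolated, so each $K_i$ is clopen) spells out a point the paper leaves implicit.
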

\begin{proof}
    It is enough to show that $\mathbb{S}$ is strongly continuous on $K_1=[0,H]\times \{W\}$, because $\mathbb{S}(\cdot|s,a)$ is a probability mass on some single absorbing state when $(s,a)\in K_i,~i=2,3,4$. Since $a= W$ when $(s,a)\in K_1$, we will drop dependence on $a$ in $v(s,a)$ for simplicity. It suffices to show that $v(h)=\int_S u(s')\mathbb{S}(ds'|s=h,a=W)$ is continuous on $[0,H]$ for any $u\in M_b(\mathcal S)$. We can write
    \begin{align}\label{proofsc1}
        v(h)=\int_0^H u(h') \HH(dh'|h).
    \end{align}
    $v$ is bounded, since the measure $\HH(dh'|h)$ is finite. Then, by \Cref{dens-h-con}, \eqref{proofsc1} can be rewritten as
    \begin{align*}
        \int_0^H u(h')\HH(dh'|h)=\int_0^H u(h')f_\HH(h'|h)dh' .
    \end{align*}
    Take any sequence $\{h_n\}$ such that $h_n\rightarrow h$ as $n\rightarrow \infty$. Since $f_\HH(h'|h)$ is continuous in $h$ for any $h'\in [0,H]$, $u(h')f_\HH(h'|h_n)\rightarrow u(h')f_\HH(h'|h)$ as $n\rightarrow \infty$. Since both $u$ and $f_\HH$ are bounded, by the dominated convergence theorem,
    \begin{align*}
        \int_0^H u(h')f_\HH(h'|h_n)dh' \rightarrow \int_0^H u(h')f_\HH(h'|h)dh'~\text{as } n\rightarrow \infty.
    \end{align*}
    Therefore, $v(h)$ is continuous in $h$.
\end{proof}
Then, by Theorem 4.2.3 and Lemma 4.2.8 in \shortciteN{hernandez2012discrete}, we can establish the Bellman's optimality condition and value iteration algorithm.
\begin{theorem}\label{optimality}
The optimal value function $V(h)$ is the solution of the optimality equation:
\begin{align}\label{bellman-con}
    V(h)=\min\{r(h),c(h)+\lambda \int_0^H V(h') \HH(dh'|h)\},\forall h.
\end{align}
Moreover, the sequence $\{V_k\}$ generated by value iteration
\begin{align*}
\begin{split}
    &V_k(h)=\min\{r(h),c(h)+\lambda \int_0^H V_{k-1}(h') \HH(dh'|h)\},\\
    &V_0(h)=0,\forall h   ,
\end{split}
\end{align*}
is a monotonically nondecreasing sequence, i.e., $V_n(h)\leq V_{n+1}(h),~\forall h\in S_H, k\in\N $ and converges pointwise to $V$, i.e., $V_n \nearrow V^*$.
\end{theorem}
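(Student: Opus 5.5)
The plan is to place the problem in the discounted-cost framework of \shortciteN{hernandez2012discrete} and to work with the Bellman operator
\begin{align*}
(\mathcal T u)(h)=\min\Big\{r(h),\,c(h)+\lambda \int_0^H u(h')\HH(dh'|h)\Big\},
\end{align*}
acting on the Banach space $M_b(S_H)$ with the sup norm. I use the $\min$ form exactly as it appears in \eqref{bellman-con}; the identical argument goes through with $\max$, and I return to this point below.

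\textbf{Step 1 (hypotheses of the cited results).} The action set $\mathcal A=\{W,T\}$ is finite, hence compact. By \Cref{mono-r-con}, the one-stage reward $g(\cdot,a)$ is continuous and bounded for each fixed $a$. By the preceding Lemma, the kernel $\mathbb S$ is strongly continuous. These are precisely the hypotheses of Theorem 4.2.3 and Lemma 4.2.8 of \shortciteN{hernandez2012discrete}.

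\textbf{Step 2 ($\mathcal T$ is well defined and a contraction).} For $u\in M_b(S_H)$, the map $h\mapsto \int_0^H u(h')\HH(dh'|h)$ is continuous by the Lemma. Hence $\mathcal T u$ is a minimum of two bounded continuous functions, so it is bounded and continuous. Next, $\mathcal T$ is monotone: $u\le w$ implies $\mathcal T u\le \mathcal T w$. It is also a $\lambda$-contraction, because $|\min\{a,b\}-\min\{a,b'\}|\le |b-b'|$ and $\HH(\cdot|h)$ is a probability measure, which together give $\|\mathcal T u-\mathcal T w\|_\infty\le \lambda\|u-w\|_\infty$. By Banach's fixed point theorem, $\mathcal T$ has a unique fixed point $u^*\in C_b(S_H)$, and $V_k=\mathcal T^k V_0\to u^*$ uniformly, hence pointwise.

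\textbf{Step 3 (identify the fixed point with $V$).} Since $\mathcal A$ is finite, the selector $\pi^*(h)$ choosing $T$ exactly when $r(h)$ attains the $\min$ is Borel measurable, because $u^*$ is continuous. The standard verification argument now applies. Iterating $u^*=\mathcal T u^*$ along any stationary policy $\pi$ and using $\lambda^n\|u^*\|_\infty\to 0$ compares $u^*$ with $V_\pi$. Equality holds along $\pi^*$, which yields $u^*=V_{\pi^*}=V$. This is the content of Theorem 4.2.3 in \shortciteN{hernandez2012discrete}, and I would cite it directly rather than redo it.

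\textbf{Step 4 (monotonicity of value iteration).} I argue by induction. Since $r,c\ge 0$, we have $V_1=\min\{r,c\}\ge 0=V_0$. If $V_{k-1}\le V_k$, then monotonicity of $\mathcal T$ gives $V_k=\mathcal T V_{k-1}\le \mathcal T V_k=V_{k+1}$. Combined with Step 2, this gives $V_k\nearrow V$.

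\textbf{Main obstacle.} The main obstacle is making the correspondence with \shortciteN{hernandez2012discrete} precise. That reference is stated for cost minimization, while the model here maximizes reward but the statement is written with $\min$. I would handle this by applying the reference to the cost $-g$, or simply by noting that Steps 2--4 use only monotonicity, the contraction property, and $g\ge 0$. All three hold equally for the $\min$ and the $\max$ forms, so the conclusion is insensitive to this convention.

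A secondary point concerns the death interval $[H_D,H]$. There $c=r=0$ and the interval is absorbing, so $V=0$ on it. This is consistent with the fixed point and requires no separate treatment.
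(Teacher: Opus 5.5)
Your Step 1 matches the paper's proof: after proving strong continuity of $\mathbb S$, the paper just invokes Theorem 4.2.3 and Lemma 4.2.8 of Hern\'andez-Lerma and Lasserre and says nothing more. Your Steps 2--4 replace most of that citation with a self-contained argument: Banach's fixed point theorem for the $\lambda$-contraction $\mathcal T$, a verification argument over stationary policies, and an induction for monotonicity of $V_k$. This is sound and arguably better suited to the model. The rewards are bounded, so the contraction argument is elementary. The cited results, by contrast, are formulated for minimizing nonnegative (lower semicontinuous, inf-compact) costs, which is not literally the reward-maximization problem defining $V=\max_{\pi\in\Pi}V_\pi$. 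For the same reason, drop the suggestion of applying the reference to the cost $-g$. That cost is nonpositive, so it falls outside the reference's hypotheses, and your Steps 2--4 make the citation unnecessary anyway.

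What is wrong is your claim that the conclusion is insensitive to the $\min$/$\max$ convention. Monotonicity, the contraction property and $g\ge 0$ do hold for both operators, but the two operators have different fixed points. With $\min$, your verification gives $u^*=\mathcal T u^*\le \mathcal T_\pi u^*$ for every stationary $\pi$, where $\mathcal T_\pi$ is the policy-evaluation operator. Hence $u^*\le V_\pi$, with equality at the minimizing selector, so $u^*=\min_{\pi}V_\pi$. The identification $u^*=V$ in your Step 3 is therefore false for the operator you chose. The $\min$ in the displayed equation, like the undefined $V^*$, is a misprint inherited from the cost-minimization convention of the reference. The paper itself uses the $\max$ form: in the proof of Theorem~\ref{controllimit}, $T\in a^*(h_1)$ is read as $r(h_1)\ge c(h_1)+\lambda\int_0^H V(h')\HH(dh'|h_1)$, and $a^*(h_2)=W$ yields $r(h_2)<V(h_2)$. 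So state and prove the $\max$ version. Your Steps 2--4 go through verbatim using three facts:
\begin{itemize}
\item $|\max\{a,b\}-\max\{a,b'\}|\le|b-b'|$;
\item $V_1=\max\{r,c\}\ge 0=V_0$;
\item the verification now gives $u^*\ge V_\pi$ for all $\pi\in\Pi$, with equality at the maximizing selector.
\end{itemize}
Hence $u^*=V$ and $V_k\nearrow V$.
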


In reliability theory, the increasing failure rate (IFR) property of a probability distribution is a widely-used concept to depict the deterioration of a system \cite{ross1996stochastic}. For a distribution function $F$ with a density or mass function $f$, we say that $F$ is IFR if its failure rate function defined by $f(t)/\bar{F}(t)$, where $\bar{F}:=1-F$ is the complementary (or tail) distribution function, is nondecreasing as a function of $t$. For our purposes, we define the IFR property for the transition kernel of a Markov chain. 

\begin{definition}\label{ifr-con}
Let $\mathbb T$ be a stochastic kernel on $(\mathcal B[0,X],[0,X])$. We say that $\mathbb T$ has the IFR property if for every $x_0\in [0,X]$, $b(x):=\int_{x_0}^X \mathbb T(dx'|x)$ is nondecreasing in $x$.
\end{definition} 

\begin{assumption}\label{ifr-ass}
The transition kernel $\HH$ has the IFR property.
\end{assumption}

\Cref{ifr-ass} has the intuitive explanation that as the patient's health deteriorates, the likelihood of further deterioration increases. The following lemma in \shortciteN{douer1994optimal} provides a necessary and sufficient condition for \Cref{ifr-con}.
\begin{lemma}\label{ifr-equiv}
The stochastic kernel $\mathbb T$ on space $(\mathcal B[0,X],[0,X])$ is IFR if and only if for any bounded, nonnegative and nondecreasing function $v:[0,X]\mapsto \R^+$, $l(x)=\int_0^X v(x)\mathbb T(dx'|x)$ is also nondecreasing.
\end{lemma}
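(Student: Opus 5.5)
The statement has a typo: the integrand should read $v(x')$. We therefore prove that $\mathbb T$ is IFR if and only if, for every bounded, nonnegative, nondecreasing $v$, the function $l(x)=\int_0^X v(x')\,\mathbb T(dx'|x)$ is nondecreasing. The argument has two directions. One is a specialization to indicator functions. The other is a layer-cake (tail-integral) representation of $v$.

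\emph{Sufficiency.} Assume $l$ is nondecreasing for every admissible $v$. Fix $x_0\in[0,X]$ and take $v=\1_{[x_0,X]}$. This $v$ is bounded, nonnegative and nondecreasing. Then $l(x)=\int_{x_0}^X\mathbb T(dx'|x)=b(x)$, so $b$ is nondecreasing for every $x_0$. This is exactly the IFR property of \Cref{ifr-con}.

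\emph{Necessity.} Assume $\mathbb T$ is IFR and let $v$ be bounded, nonnegative and nondecreasing, with $0\le v\le \bar v$. By the layer-cake formula, $v(x')=\int_0^{\bar v}\1\{v(x')>t\}\,dt$. Since $v$ is nondecreasing, each superlevel set $A_t=\{x':v(x')>t\}$ is an up-interval. It is either empty, or $(a_t,X]$, or $[a_t,X]$ for some $a_t\in[0,X]$.

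Fubini's theorem applies because everything is nonnegative and bounded, and $(x',t)\mapsto \1\{v(x')>t\}$ is jointly measurable ($v$ is monotone, hence Borel). This gives
\begin{align*}
l(x)=\int_0^{\bar v}\mathbb T(A_t|x)\,dt .
\end{align*}
It therefore suffices to show that $x\mapsto \mathbb T(A_t|x)$ is nondecreasing for each $t$. Integrating over $t$ then preserves monotonicity.

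\emph{Main technical point.} The IFR definition only controls closed tails $[x_0,X]$, while some superlevel sets are open tails $(a,X]$. I would handle these by monotone approximation. Write $(a,X]=\bigcup_n[a+1/n,X]$, an increasing union. Continuity of the probability measure $\mathbb T(\cdot|x)$ from below then gives $\mathbb T((a,X]|x)=\lim_n \mathbb T([a+1/n,X]|x)$. Each term in the limit is nondecreasing in $x$, and a pointwise limit of nondecreasing functions is nondecreasing. The empty set contributes the constant $0$, which is trivially nondecreasing.

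Alternatively, one could approximate $v$ from below by step functions $\sum_i \alpha_i\1_{[x_i,X]}$ with $\alpha_i\ge0$ and use monotone convergence. The layer-cake route handles the open tails more cleanly, so I would use that.
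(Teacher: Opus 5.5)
Your proof is correct, and you are right to repair the integrand to $v(x')$. As printed, $l(x)=v(x)\int_0^X\mathbb T(dx'|x)=v(x)$ is trivially nondecreasing, so the stated equivalence would make every kernel IFR.

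The paper gives no proof of this lemma; it imports it from Douer and Yechiali (1994). Your argument is therefore a self-contained replacement rather than a variant of an in-paper proof. The closest in-paper argument is the proof of Lemma~\ref{lemma:schaefer}. It treats a monotone integrand by first establishing the inequality for monotone step functions and then passing to the limit by dominated convergence. That is essentially the alternative you mention at the end.

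Your layer-cake identity $l(x)=\int_0^{\bar v}\mathbb T(A_t|x)\,dt$ is the exact, continuous version of that decomposition. It represents $v$ as a mixture of tail indicators, so monotonicity transfers in one step by Tonelli without approximating $v$. What remains is the point you isolate: that $x\mapsto\mathbb T((a,X]|x)$ is nondecreasing. In the step-function route the same issue is hidden in building closed-tail approximants; already $v=\1_{(a,X]}$ requires $\1_{[a+1/n,X]}\nearrow v$.

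Making this explicit is a genuine clarification. The notation $\int_{x_0}^X$ in Definition~\ref{ifr-con} does not say whether $x_0$ is included, and this matters when $\mathbb T(\cdot|x)$ has atoms. Your argument works under either reading. Under the open-tail reading, take $v=\1_{(x_0,X]}$ for the direction that deduces IFR. Then obtain $[a,X]$ for $a>0$ as the decreasing intersection of the sets $(a-1/n,X]$, using continuity from above, and note that $[0,X]$ has measure one. Finally, nothing in your proof uses a density, so it establishes the lemma for arbitrary stochastic kernels on $[0,X]$, as stated.
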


The monotinicity of the value function $V$ can be easily shown from \eqref{bellman-con} in \Cref{optimality} and \Cref{ifr-equiv}.
\begin{theorem}\label{th:mono}
Under \Cref{mono-r-con,dens-h-con,ifr-ass}, the value function $V$ is nonincreasing on $S_H$.
\end{theorem}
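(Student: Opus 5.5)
The plan is to argue by induction along the value iteration sequence of \Cref{optimality} and then pass to the limit. \Cref{optimality} gives $V_k \to V$ pointwise. A pointwise limit of nonincreasing functions is nonincreasing, so it suffices to show that every iterate $V_k$ is nonincreasing on $S_H=[0,H]$. The base case is trivial, since $V_0\equiv 0$.

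For the inductive step, suppose $V_{k-1}$ is nonincreasing. It is also bounded and nonnegative: it is built from the bounded nonnegative rewards $c,r$ through finitely many iterations, and each iteration preserves boundedness because $\HH(\cdot|h)$ is a probability measure. Define $v := \bar V - V_{k-1}$, where $\bar V$ is an upper bound for $V_{k-1}$ (for instance $\sup g/(1-\lambda)$). Then $v$ is bounded, nonnegative and nondecreasing on $[0,H]$. By \Cref{ifr-ass} and \Cref{ifr-equiv}, the map $h\mapsto \int_0^H v(h')\HH(dh'|h)$ is nondecreasing. Since $\HH(\cdot|h)$ has total mass one,
\begin{align*}
\int_0^H V_{k-1}(h')\HH(dh'|h) = \bar V - \int_0^H v(h')\HH(dh'|h),
\end{align*}
so this integral is nonincreasing in $h$. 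By \Cref{mono-r-con}, $c$ and $r$ are nonincreasing, so $c(h)+\lambda\int_0^H V_{k-1}(h')\HH(dh'|h)$ is nonincreasing. The pointwise min (or max) of two nonincreasing functions is again nonincreasing, hence $V_k$ is nonincreasing. \Cref{dens-h-con} is used only through \Cref{optimality}, to guarantee that the iteration is well defined and converges.

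The main obstacle is that \Cref{ifr-equiv} is stated for nondecreasing nonnegative test functions, while the iterates are nonincreasing. The reflection $v=\bar V - V_{k-1}$ handles this, and it is the step where I would check carefully that the constant shift is legitimate: it needs a uniform bound $\bar V$ for all iterates and the fact that $\HH(\cdot|h)$ is a probability measure. A second point to handle is the death interval $[H_D,H]$, where $c=r=0$. There the iterates vanish on an absorbing set, which is consistent with monotonicity because all values elsewhere are nonnegative. No continuity of $V$ is needed, since monotonicity passes to pointwise limits directly.
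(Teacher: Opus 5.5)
Your argument is correct and follows the route the paper has in mind: the paper only remarks that monotonicity follows from the optimality equation of \Cref{optimality} together with \Cref{ifr-equiv}, and your proof makes this precise by showing the Bellman update preserves bounded nonincreasing functions and passing to the pointwise limit of the value-iteration iterates. The reflection $v=\bar V-V_{k-1}$ is a legitimate way to meet the nondecreasing-nonnegative hypothesis of \Cref{ifr-equiv}, and it only needs a bound on each iterate separately, not a uniform one.
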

\Cref{th:mono} implies that the patient's overall benefit, e.g., the total QALYs, will not increase if the patient health deteriorates.

\Cref{controllimit} provides sufficient conditions for the existence of a control limit optimal policy. Specifically, \Cref{controllimit} shows that there exists an optimal policy $\pi^*$ that partitions $S_H$ into two intervals:
\begin{align}\label{control}
\begin{split}
    \pi^*(h)=\begin{cases}
    W &\text{if } h<\theta^*,\\
    T &\text{if } h\geq \theta^*,
    \end{cases}   
\end{split}
\end{align} 
where $\theta^*$ is called the optimal \textit{control limit} (or threshold). The optimal action to take depends only on whether the state $h$ is greater than or less than the control limit $\theta^*$, and solving the MDP problem boils down to finding this optimal threshold. To prove \Cref{controllimit}, we need the following lemma \shortcite{alagoz2007determining} and several additional assumptions. 
\begin{lemma}\label{lemma:schaefer}
Let $v$ be a bounded, nonnegative, and nonincreasing function. If the stochastic kernel $\mathbb T$ defined on $(\mathcal B[0,X],[0,X])$ is IFR and admits a uniformly bounded density function $f:[0,X]^2\mapsto \R_+$, then the following results hold: for any $x_1<x_2$,
\begin{itemize}
    \item $\int_0^{x_1}v(x)(f(x|x_1)-f(x|x_2))dx\geq v(x_1)\int_0^{x_1}(f(x|x_1)-f(x|x_2))dx$;
    \item $\int_{x_1}^X v(x)(f(x|x_1)-f(x|x_2))dx\geq v(x_1)\int_{x_1}^X(f(x|x_1)-f(x|x_2))dx$.
\end{itemize}
\end{lemma}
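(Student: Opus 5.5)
The plan is to reduce both inequalities to sign information on partial integrals of the difference $d(x):=f(x|x_1)-f(x|x_2)$, and then pass from indicator functions to general monotone functions by a layer-cake (Fubini) representation.

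\textbf{Step 1 (sign of tail and head integrals).} Fix $x_1<x_2$. By the IFR property in \Cref{ifr-con}, applied with an arbitrary lower limit $t\in[0,X]$ in place of $x_0$, the map $x\mapsto\int_t^X f(x'|x)\,dx'$ is nondecreasing. Hence
\begin{align*}
\int_t^X d(x)\,dx\le 0 \quad \text{for all } t\in[0,X].
\end{align*}
Both $f(\cdot|x_1)$ and $f(\cdot|x_2)$ integrate to one, so $\int_0^X d=0$. It follows that
\begin{align*}
\int_0^t d(x)\,dx\ge 0 \quad \text{for all } t.
\end{align*}
Since $d$ is an integrable difference of densities, individual points carry no mass. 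Therefore open versus closed endpoints never matter.

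\textbf{Step 2 (first bullet).} The first bullet is equivalent to $\int_0^{x_1} w(x)d(x)\,dx\ge 0$, where $w(x):=v(x)-v(x_1)$. On $[0,x_1]$ the function $w$ is bounded, nonnegative and nonincreasing, because $v$ is nonincreasing. Write $w(x)=\int_0^\infty \mathbf 1\{w(x)>s\}\,ds$. For each $s\ge0$ the set $\{x\in[0,x_1]: w(x)>s\}$ is an initial interval with endpoints $0$ and some $t_s\le x_1$, by monotonicity. Since $w$ is bounded and $d$ is integrable (bounded density on a compact set), Fubini applies and gives
\begin{align*}
\int_0^{x_1} w\,d\,dx=\int_0^{\sup w}\Big(\int_0^{t_s} d(x)\,dx\Big)ds\ge 0,
\end{align*}
by Step 1.

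\textbf{Step 3 (second bullet).} Symmetrically, set $u(x):=v(x_1)-v(x)$ on $[x_1,X]$, which is nonnegative, bounded and nondecreasing. The second bullet is equivalent to $\int_{x_1}^X u\,d\,dx\le 0$. Each superlevel set $\{u>s\}$ is a final interval with endpoints $t_s\ge x_1$ and $X$. The same layer-cake and Fubini argument expresses the integral as $\int_0^{\sup u}\int_{t_s}^X d\,dx\,ds$, which is $\le 0$ by the tail inequality of Step 1.

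The only real subtlety is Step 1: the IFR definition must be read as holding for every lower limit $t$, not just a fixed $x_0$. This is exactly what \Cref{ifr-con} quantifies over, so I expect no obstacle there. The remaining steps are routine measure theory, using that monotone functions are measurable and that bounded densities justify Fubini. Nonnegativity of $v$ is not even needed beyond boundedness.
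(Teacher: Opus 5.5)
Your proof is correct. It rests on the same key fact as the paper's argument but organizes it differently.

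The paper first takes $v=\sum_i v_i\1_{A_i}$ to be a nonincreasing simple function with interval level sets, and does a summation by parts. It replaces $v_1$ by $v_2$ on $A_1$, then $v_2$ by $v_3$ on $A_1\cup A_2$, and so on, ending at $v_n\int_0^{x_1}(f(x|x_1)-f(x|x_2))\,dx$. Each step silently uses that $\int_0^t (f(x|x_1)-f(x|x_2))\,dx\ge 0$ on an initial segment. That is exactly your Step 1, which the paper never states or derives from IFR plus unit total mass. The paper then reaches general $v$ by monotone approximation with simple functions and dominated convergence; this is where it invokes the uniform bound on $f$. It dismisses the second bullet as similar.

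Your layer-cake representation of $w=v-v(x_1)$ is the continuous analogue of that summation by parts. Fubini then handles an arbitrary monotone $v$ in one stroke. You avoid the limiting step and its small unstated checks: that the approximants can be chosen nonincreasing in $x$, and that they converge at $x_1$ so the final value $v_n$ tends to $v(x_1)$.

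Your version also makes two things clear. Nonnegativity of $v$ is irrelevant. And since $f(\cdot|x_1)-f(\cdot|x_2)$ is integrable simply as a difference of probability densities, the uniform bound on $f$ is not actually needed. The paper's discrete route is a bit more elementary for a reader who does not know the layer-cake formula, but it leaves the essential sign fact implicit, whereas you isolate it.
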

\begin{proof}
We only provide a proof for the first part, as the second part can be proved in a similar way. First, we consider the case that $v$ is simple function, i.e., $v(x)=\sum_{i=1}^n v_i \1_{A_i}(x)$ for some $n$, where $\1_{A_i}$ is the indicator function of set $A_i$ and $\{A_i\}_{i=1,\cdots,n}$ is a partition of $[0,x_1]$. $v$ nonincreasing allows us to take each $A_i$ to be an interval. Assume that $v_1\geq v_2\geq \cdots \geq v_n$. Then,
\begingroup
\allowdisplaybreaks
\begin{align*}
    \int_0^{x_1}v(x)(f(x|x_1)-f(x|x_2))dx&=\sum_{i=1}^n v_i \int_{A_i} (f(x|x_1)-f(x|x_2))dx\\
    &=v_1 \int_{A_1} (f(x|x_1)-f(x|x_2))dx+\sum_{i=2}^n v_i \int_{A_i} (f(x|x_1)-f(x|x_2))dx\\
    &\geq v_2 \int_{A_1} (f(x|x_1)-f(x|x_2))dx+\sum_{i=2}^n v_i \int_{A_i} (f(x|x_1)-f(x|x_2))dx\\
    &\geq v_n \int_0^{x_1}(f(x|x_1)-f(x|x_2))dx \\
    &= v(x_1)\int_0^{x_1}(f(x|x_1)-f(x|x_2))dx.
\end{align*}
\endgroup
For any general nonincreasing function $v$, we can take a monotone sequence of simple functions $\{v_n\}$ such that $v_n\nearrow v$ pointwise. Since $f$ is uniformly bounded, the result follows from the dominated convergence theorem.
\end{proof}

\begin{assumption}\label{h-based-con1}
For any $h_1<h_2 \text{ and } h_0$,
    \begin{align*}
        \int_{h_0}^{H_D} f_\HH(h|h_1) dh \leq \int_{h_0}^{H_D} f_\HH(h|h_2) dh.
    \end{align*}
\end{assumption}

\Cref{h-based-con1} takes a similar form as the IFR property and can be interpreted similarly, but \Cref{h-based-con1} is neither a sufficient nor a necessary condition for the IFR property of $\HH$.

\begin{assumption}\label{h-based-con2}
For any $h_1<h_2$,
    \begin{align*}\label{rate-condition-2-con}
    \frac{r(h_1)-r(h_2)}{r(h_2)} \leq  \lambda\left(\int_{H_D}^H f_\HH(h|h_2) dh - \int_{H_D}^H f_\HH(h|h_1) dh \right).
    \end{align*}
\end{assumption}

\Cref{h-based-con2} has an intuitive explanation that, as the patient health becomes worse, the increment of the probability of death during waiting is greater than the \textit{marginal} reduction in the transplantation reward. \shortciteN{alagoz2004optimal} presents empirical evidence that \Cref{h-based-con1,h-based-con2} are applicable in the context of living-donor liver transplantation.

\begin{theorem}\label{controllimit}
Under \Cref{mono-r-con,dens-h-con,ifr-ass,h-based-con1,h-based-con2}, there exists a control limit optimal policy taking the form of \eqref{control}. 
\end{theorem}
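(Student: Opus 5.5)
We read the optimality equation \eqref{bellman-con} with $\max$ in place of $\min$, since the objective is maximization. Write $D(h):=\int_{H_D}^H f_\HH(h'|h)\,dh'$ for the one-period death probability from state $h$. Write $Q(h):=c(h)+\lambda\int_0^H V(h')f_\HH(h'|h)\,dh'$ for the continuation value. It suffices to show that the stopping set $\{h: r(h)\geq Q(h)\}$ is an upper set: if $r(h_1)\geq Q(h_1)$ and $h_1<h_2$, then $r(h_2)\geq Q(h_2)$. Taking $\theta^*$ as the infimum of this set then gives a policy of the form \eqref{control}, which is optimal by \Cref{optimality}. At the infimum itself we either use continuity of $r$ and $Q$ (the latter from the strong Feller lemma) or simply choose $T$ on the boundary.

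\textbf{Reduction.} Fix $h_1<h_2$ with $V(h_1)=r(h_1)\geq Q(h_1)$. \Cref{h-based-con2}, together with $r(h_2)\leq r(h_1)$, gives
\begin{align*}
r(h_2)\geq r(h_1)-\lambda r(h_2)\bigl(D(h_2)-D(h_1)\bigr)\geq r(h_1)-\lambda r(h_1)\bigl(D(h_2)-D(h_1)\bigr).
\end{align*}
Here we use $D(h_2)\geq D(h_1)$, which follows from \Cref{ifr-ass} with $x_0=H_D$. Since $c(h_2)\leq c(h_1)$ by \Cref{mono-r-con} and $r(h_1)\geq Q(h_1)$, it is enough to prove
\begin{align*}
\int_0^H V(h)\bigl(f_\HH(h|h_1)-f_\HH(h|h_2)\bigr)dh\;\geq\; V(h_1)\bigl(D(h_2)-D(h_1)\bigr)=V(h_1)\int_0^{H_D}\bigl(f_\HH(h|h_1)-f_\HH(h|h_2)\bigr)dh .
\end{align*}
Because $V=0$ on $[H_D,H]$, the left-hand integral can be restricted to $[0,H_D]$. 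We then split $[0,H_D]$ into $[0,h_1]$ and $[h_1,H_D]$; if $h_1\geq H_D$ the patient is dead and the statement is trivial.

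\textbf{The two pieces.} On $[0,h_1]$ we apply the first part of \Cref{lemma:schaefer}. Its hypotheses hold: $V$ is bounded, nonnegative and nonincreasing by \Cref{th:mono}, the kernel is IFR by \Cref{ifr-ass}, and the density is bounded by \Cref{dens-h-con}. This yields
\begin{align*}
\int_0^{h_1}V(f_1-f_2)\geq V(h_1)\int_0^{h_1}(f_1-f_2).
\end{align*}
On $[h_1,H_D]$ we cannot quote the second part of the lemma directly, because the interval is truncated at $H_D$ rather than $H$. Instead we redo its argument with \Cref{h-based-con1} replacing IFR. Write $V(h)=V(h_1)-\mu((h_1,h])$ for the nonnegative measure $\mu$ of decrements of $V$ on $(h_1,H_D)$; for general $V$, approximate by simple nonincreasing functions and pass to the limit by dominated convergence, as in the proof of \Cref{lemma:schaefer}. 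By Fubini,
\begin{align*}
\int_{h_1}^{H_D}\bigl(V-V(h_1)\bigr)(f_1-f_2)=-\int \Bigl(\int_t^{H_D}(f_1-f_2)\Bigr)\mu(dt)\geq 0,
\end{align*}
since \Cref{h-based-con1} gives $\int_t^{H_D}(f_1-f_2)\leq 0$ for every $t$. Adding the two pieces gives the required inequality, and hence $r(h_2)\geq Q(h_2)$.

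\textbf{Main obstacle.} The delicate step is the truncated tail estimate on $[h_1,H_D]$. This is precisely where \Cref{h-based-con1}, rather than the IFR property alone, is needed, since IFR only controls tails extending to $H$. Care is also needed with the sign conventions: the $\min$ in \eqref{bellman-con} must be read as $\max$, and one must use $V(h_1)=r(h_1)$ so that the right-hand side matches the form of \Cref{h-based-con2}.
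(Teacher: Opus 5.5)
Your proof is correct and is essentially the paper's argument, run directly instead of by contradiction. The paper likewise subtracts the Bellman inequalities at $h_1$ and $h_2$, drops $c(h_1)-c(h_2)\geq 0$, and splits $\int_0^{H_D}V(h')\bigl(f_\HH(h'|h_1)-f_\HH(h'|h_2)\bigr)dh'$ at $h_1$ to bound it below by $V(h_1)$ times the increase in death probability, then closes with \Cref{h-based-con2} and $V(h_2)\leq V(h_1)$.

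Two parts of your write-up are genuine improvements in rigor. The Fubini treatment of the truncated piece $[h_1,H_D]$ via \Cref{h-based-con1} makes explicit what the paper leaves implicit when it cites \Cref{lemma:schaefer} there. The continuity argument showing the stopping set is closed is also something the paper omits. However, the fallback ``or simply choose $T$ on the boundary'' is not valid on its own and should rest on that continuity argument.
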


\begin{proof}
By contradiction, suppose that for some $h_1<h_2$, $T\in a^*(h_1)$, but $a^*(h_2)=W$, where $a^*(h)$ is the set of optimal actions at $h$. Then, we have
\begin{align*}
    r(h_1)&\geq c(h_1)+\lambda \int_0^{H_D}  V(h') f_\HH(h'|h_1)dh',\\
    r(h_2)&<c(h_2)+\lambda \int_0^{H_D} V(h') f_\HH(h'|h_2)dh'.
\end{align*}
Then,
\begin{align*}
    r(h_1)-r(h_2)&> c(h_1)-c(h_2)+\lambda \int_0^{H_D}  V(h') (f_\HH(h'|h_1)-f_\HH(h'|h_2))dh'\\
    &\geq \lambda \int_0^{h_1}  V(h') (f_\HH(h'|h_1)-f_\HH(h'|h_2))dh' +\lambda\int_{h_1}^{H_D}  V(h') (f_\HH(h'|h_1)-f_\HH(h'|h_2))dh'\\
    &\geq \lambda V(h_1) \int_0^{H_D}  (f_\HH(h'|h_1)-f_\HH(h'|h_2))dh'\\
    &=\lambda V(h_1) \left(\int_{H_D}^H f_\HH(h|h_2) dh - \int_{H_D}^H f_\HH(h|h_1) dh \right),
\end{align*}
where the second inequality follows from \Cref{mono-r-con}, and the third inequality follows from \Cref{lemma:schaefer}. By \Cref{h-based-con2}, we have $V(h_1)<r(h_2)$. Since $a^*(h_2)=W$, \begin{align*}
    r(h_2)<V(h_2)\leq V(h_1)\leq V(h_1),
\end{align*}
which is a contradiction. Therefore, for any $h_1<h_2$, $T\in a^*(h_1)$ implies that $T\in a^*(h_2)$.
\end{proof}
\Cref{controllimit} has an intuitive explanation that the patient should be transplanted if and only if their health status is worse than some threshold (recall that a larger patient state $h_n$ implies the worse health status).

\section{Smoothed perturbation analysis (SPA) estimator}\label{sec:spa}
In this section, we propose an SPA estimator for the gradient of the value function w.r.t. the control limit. Throughout this section, we suppose that a control limit policy with control limit $\theta$, denoted by $\pi_\theta$, is implemented. For a fixed initial condition $h_0\in S_H$, let $V(\theta)$ be the value function associated with $\pi_\theta$ and assume that $V(\theta)$ is differentiable w.r.t. $\theta$. We want to estimate the gradient of $V(\theta)$ w.r.t. $\theta$. 

We denote by $h_k(\theta),~\forall k$ the patient state at period $k$, under control limit policy $\pi_\theta$. For fixed $n\in\N$ and $h_0\in S_H$, we consider the following sample performance
\begin{align*}
    v_n(\theta) = \sum_{k=0}^n \lambda^k g(h_k,\pi_\theta(h_k)),
\end{align*}
i.e., the total discounted reward until period $n$ under policy $\pi_\theta$. Notice that
\begin{align*}
    g(h,\pi_\theta(h)) = \begin{cases} r(h) &h\geq \theta, \\ c(h) &h<\theta. \end{cases}
\end{align*}
Given a nominal sample path under policy $\pi_\theta$, suppose that a perturbation of $\Delta\theta$ is introduced to construct a sample path under policy $\pi_{\theta+\Delta\theta}$, called the perturbed sample path. An infinitesimal perturbation analysis (IPA) estimator comes from taking the derivative of each $g(h_k,\pi_\theta(h_k))$ while assuming that the event $\{h_n\neq P,~h_n<\theta\}$ is unchanged, i.e., the perturbation $\Delta\theta$ results in no change in the transplant decision (thus there is also no change in the sample path). Under this assumption,
\begin{align*}
    \frac{dh_k(\theta)}{d\theta}=0~\text{w.p. }1,
\end{align*}
and
\begin{align*}
    \frac{dg(h_k,\pi_\theta(h_k))}{d\theta} &= \frac{\partial g(h_k,\pi_\theta(h_k))}{\partial \theta} + \frac{\partial g(h_k,\pi_\theta(h_k))}{\partial h}\frac{dh_k(\theta)}{d\theta}\\
    &=\frac{\partial g(h_k,\pi_\theta(h_k))}{\partial \theta}\\
    &= 0~\text{w.p. }1,
\end{align*}
because for fixed $h$, $g(h,\pi_\theta(h))$ is a function of $\theta$ with only one discontinuity of zero measure.

However, the IPA estimator does not capture the discrete changes that occur, for example, when the action in some period changes from ``transplant" in the nominal sample path to ``wait" in the perturbed sample path. We use SPA to calculate discrete changes caused by the change of the action. Specifically, by conditioning on suitable quantities, we compute the conditional expectation on the change in $v_n$, and take $\Delta\theta\rightarrow 0$. In this MDP model, discrete changes may potentially occur only at $M(n)=\min\{i\leq n : h_i\geq \theta\} $, i.e., the period when a transplantation happens in the nominal sample path. Conditioned on $M(n)$, the event $\{h_{M(n)}\geq\theta\}$ is equivalent to $\{\alpha_n \geq 0\}$, where $\alpha_n:=h_{M(n)}-\theta$. The action in the perturbed sample path alters if $\alpha_n < \Delta\theta$. Conditioned on $h_{M(n)-1}$, the SPA estimator is expressed as
\begin{align*}
    \left(\frac{\partial v_n(\theta)}{\partial \theta}\right)_{SPA} &=\lim_{\Delta\theta\rightarrow 0} \E(\Delta v_n(\theta) \1\{\alpha_n\leq \Delta\theta\}|h_{M(n)-1}) /\Delta\theta\\
    &=\lim_{\Delta\theta\rightarrow 0} \E(\Delta v_n(\theta)|\alpha_n\leq \Delta\theta|h_{M(n)-1})\P(\alpha_n\leq \Delta\theta|h_{M(n)-1})/\Delta\theta,
\end{align*}
where
\begingroup
\allowdisplaybreaks
\begin{align*}
    \P(\alpha_n\leq \Delta\theta|h_{M(n)-1}) & = \P(\alpha_n\leq \Delta\theta | \alpha_n\geq 0,h_{M(n)-1})\\
    &= \P(  \theta\leq h_{M(n)}\leq \theta+ \Delta\theta|h_{M(n)-1})/ \P(h_{M(n)}\geq \theta|h_{M(n)-1})\\
    &=\frac{\int_\theta^{\theta+\Delta\theta} \HH (dh|h_{M(n)-1})}{\int_\theta^{H} \HH (dh|h_{M(n)-1})}   .
\end{align*}
\endgroup
Therefore,
\begin{align*}
    \lim_{\Delta\theta\rightarrow 0} \P(\alpha_n\leq \Delta\theta|h_{M(n)-1})/\Delta\theta = \frac{f_\HH(\theta|h_{M(n)-1})}{\int_\theta^{H} \HH(dh|h_{M(n)-1})}.
\end{align*}
The term $\lim_{\Delta\theta\rightarrow 0} \E(\Delta v_n(\theta)|\alpha_n\leq \Delta\theta,h_{M(n)-1})$ is the extra accumulated reward caused by the change of the action, which is given by
\begin{align*}
    \lim_{\Delta\theta\rightarrow 0} \E(\Delta v_n(\theta)|\alpha_n\leq \Delta\theta,h_{M(n)-1}) = \lambda^{M(n)}(r(\theta)-c(\theta)) - \E\left(\sum_{i=M(n)+1}^{n}\lambda^i g(h_i,\pi_\theta(h_i))|h_{M(n)}=\theta^-\right).
\end{align*}
The final SPA gradient estimator is given by
\begin{align*}
    \left(\frac{\partial v_n(\theta)}{\partial \theta}\right)_{SPA}=\frac{f_\HH(\theta|h_{M(n)-1})}{\int_\theta^{H} \HH (dh|h_{M(n)-1})} \left(\lambda^{M(n)}(r(\theta)-c(\theta)) - \E\left(\sum_{i=M(n)+1}^{n}\lambda^i g(h_i,\pi_\theta(h_i))|h_{M(n)}=\theta^-\right)\right).
\end{align*}
Now we formally show that $\left(\frac{\partial v_n(\theta)}{\partial \theta}\right)_{SPA}$ is an asymptotically unbiased estimator, i.e.,
\begin{align*}
    \E\left(\frac{\partial v_n(\theta)}{\partial \theta}\right)_{SPA} \rightarrow \frac{\partial V(\theta)}{\partial\theta} \text{ as }n\rightarrow\infty.
\end{align*}
First, we show that $\left(\frac{\partial v_n(\theta)}{\partial \theta}\right)_{SPA}$ is an unbiased estimator of $\frac{\partial \E (v_n(\theta))}{\partial\theta}$.
\begin{theorem}\label{unbias}
Under \Cref{mono-r-con,dens-h-con}, $\left(\frac{\partial v_n(\theta)}{\partial \theta}\right)_{SPA}$ is an unbiased estimator of $\frac{\partial \E (v_n(\theta))}{\partial\theta}$, i.e., 
\begin{align*}
    \E\left(\frac{\partial v_n(\theta)}{\partial \theta}\right)_{SPA} = \frac{\partial \E (v_n(\theta))}{\partial\theta}.
\end{align*}
\end{theorem}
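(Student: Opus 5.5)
We will compute $\frac{\partial \E(v_n(\theta))}{\partial\theta}$ directly and check that it equals the expectation of the SPA estimator. The plan is to decompose $\E(v_n(\theta))$ according to the transplant time $M(n)$. Set $M(n)=n+1$ on the event that no transplant occurs by period $n$. We also adopt the convention that the chain is killed once it enters $[H_D,H]$; since $c=r=0$ there, this does not change the rewards. For $m\le n$, the event $\{M(n)=m\}$ is $\{h_0<\theta,\dots,h_{m-1}<\theta,\ h_m\ge\theta\}$.

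By the Markov property and \Cref{dens-h-con},
\begin{align*}
\E(v_n(\theta))=\sum_{m=0}^{n}\int_{[0,\theta)^m}\prod_{k=1}^{m}f_\HH(h_k|h_{k-1})\Big(\sum_{k=0}^{m-1}\lambda^k c(h_k)\Big)\Big(\int_\theta^H f_\HH(h|h_{m-1})dh\Big)dh_1\cdots dh_{m}\\
+\sum_{m=0}^{n}\lambda^m\int_{[0,\theta)^{m}}\prod_{k=1}^{m}f_\HH(h_k|h_{k-1})\int_\theta^H r(h)f_\HH(h|h_{m-1})\,dh\,dh_1\cdots dh_{m-1}
\end{align*}
Equivalently, and more conveniently, we write $W_j(h;\theta)=\E(\sum_{i=0}^{j}\lambda^i g(h_i,\pi_\theta(h_i))|h_0=h)$. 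This satisfies the recursion
\begin{align*}
W_j(h;\theta)=r(h)\1\{h\ge\theta\}+\1\{h<\theta\}\Big(c(h)+\lambda\int_0^\theta W_{j-1}(h';\theta)f_\HH(h'|h)dh'+\lambda\int_\theta^H r(h')f_\HH(h'|h)dh'\Big).
\end{align*}
Then $\E v_n(\theta)=W_n(h_0;\theta)$. Here $\theta$ enters only through integration limits and indicators. The integrands are bounded by \Cref{mono-r-con} and \Cref{dens-h-con}: $|W_j|\le (n+1)\max(\|c\|_\infty,\|r\|_\infty)$ and $f_\HH<M$. The integrands are also continuous in the limit variable at $\theta$, since $r$ and $c$ are continuous and $f_\HH(\theta|h)$ is bounded. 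We can therefore apply the Leibniz rule by induction on $j$, for $h_0\neq\theta$. Differentiating the upper limit $\theta$ in the two integrals gives the boundary term
\begin{align*}
\lambda f_\HH(\theta|h)\big(W_{j-1}(\theta^-;\theta)-r(\theta)\big).
\end{align*}
Differentiating inside the first integral gives $\lambda\int_0^\theta \partial_\theta W_{j-1}(h';\theta) f_\HH(h'|h)dh'$. We must interchange derivative and integral here. Dominated convergence handles this, using an inductive uniform bound on the difference quotients of $W_{j-1}$; the bound follows from the boundedness of $f_\HH$ and the recursion.

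Unrolling the induction writes $\partial_\theta W_n(h_0;\theta)$ as a sum over $m$ of path integrals over $\{h_0,\dots,h_{m-1}<\theta\}$. Each term carries the factor $\lambda^m f_\HH(\theta|h_{m-1})$ times $W_{n-m}(\theta^-;\theta)-r(\theta)$. Note that $W_{n-m}(\theta^-;\theta)=c(\theta)+\E(\sum_{i\ge1}^{n-m}\lambda^i g|h_0=\theta^-)$; this identifies the bracket with the SPA bracket up to sign convention. Namely, the SPA bracket is $\lambda^{M(n)}(r(\theta)-c(\theta))$ minus the continuation reward from $\theta^-$. Shifting the time index by $M(n)$ turns $\lambda^{m}\E(\sum_{i=1}^{n-m}\lambda^i\cdots)$ into $\E(\sum_{i=m+1}^{n}\lambda^i\cdots)$. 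We will also check the direction convention of $\Delta\theta$ in the paper's derivation carefully, since it fixes the sign.

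The last step computes $\E(\partial v_n/\partial\theta)_{SPA}$ by conditioning on $M(n)=m$ and $h_{m-1}$. The density of reaching $h_{m-1}$ with all earlier states below $\theta$ is the product of the $f_\HH$ factors. The probability of then transplanting is $\P(h_m\ge\theta|h_{m-1})=\int_\theta^H\HH(dh|h_{m-1})$. This cancels exactly the denominator of the SPA estimator. We are left with $f_\HH(\theta|h_{m-1})$ times the bracket, integrated over the same region, which matches the derivative term by term.

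The main obstacle is rigorously justifying the derivative–integral interchange and the one-sided limit $W(\theta^-)$. Both are handled by the uniform bound $M$ on $f_\HH$, the boundedness of the rewards, and the finiteness of the horizon $n$, which lets dominated convergence apply at each of the finitely many stages.
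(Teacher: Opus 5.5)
Your argument is correct in substance, but it takes a different route from the paper. The paper argues pathwise on coupled sample paths. It splits the difference quotient according to the event $B_n$ that the first crossing $h_{M(n)}$ lands in $[\theta,\theta+\Delta\theta)$; on the complement the two paths coincide. It then bounds $\E((v_n(\theta+\Delta\theta)-v_n(\theta))\1\{B_n\}\mid h_{M(n)-1})/\Delta\theta$ uniformly via the density bound $M$, applies dominated convergence once to move $\Delta\theta\to 0$ inside the outer expectation, and reads off the limiting conditional probability and conditional expectation.

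You instead differentiate the finite-horizon recursion for $W_j(\cdot;\theta)$ by Leibniz's rule, inductively in $j$. You unroll it into path integrals weighted by $\lambda^m f_\HH(\theta|h_{m-1})$ and compare with the estimator only at the end. The paper's route is shorter and mirrors how the estimator was built. Yours makes explicit two steps the paper glosses over:
\begin{itemize}
\item \emph{Where the factor $1/\int_\theta^H\HH(dh|h_{M(n)-1})$ comes from.} The paper writes $\P(B_n\mid h_{M(n)-1})$ as the unnormalized $\P(\theta\le h_{M(n)}<\theta+\Delta\theta\mid h_{M(n)-1})$, yet states its limit with the denominator. Your last step shows the denominator is cancelled exactly by $\P(h_m\ge\theta\mid h_{m-1})$ when integrating over $\{M(n)=m\}$.
\item \emph{Why the continuation under $\pi_{\theta+\Delta\theta}$ converges to the continuation under $\pi_\theta$ from $\theta^-$.} Your inductive bound on the $\theta$-difference quotients of $W_{j-1}$ handles this. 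For the left limit $W_{j-1}(\theta^-;\theta)$ you should cite the strong-continuity lemma (continuity of $f_\HH$ in the conditioning variable, plus continuity of $c$), not boundedness.
\end{itemize}

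There are two points to tighten.

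First, the boundary term does not follow from boundedness of $f_\HH$. The limit $\frac{1}{\Delta}\int_\theta^{\theta+\Delta} f_\HH(h'|h)\,dh'\to f_\HH(\theta|h)$ requires continuity of $h'\mapsto f_\HH(h'|h)$ at $\theta$. \Cref{dens-h-con} only gives continuity in the conditioning variable, so without this as an extra hypothesis you only get the limit at Lebesgue points. The paper's proof uses the same fact tacitly, so this is a shared omission rather than a defect of your route.

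Second, there is no sign convention to check. $\partial_\theta\E(v_n(\theta))$ is a two-sided derivative, so the direction of $\Delta\theta$ cannot change its sign. Your bracket $\lambda^m(c(\theta)-r(\theta))+\E(\sum_{i=m+1}^n\lambda^i g(h_i,\pi_\theta(h_i))\mid h_m=\theta^-)$ is the correct one. It agrees with the last display of the paper's own proof and with the negative values in \Cref{resl}. The estimator displayed before the theorem carries the opposite sign. You should state the result for the estimator with your bracket, set to $0$ on $\{M(n)>n\}$, rather than hedge.

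Two minor items. Your first display for $\E(v_n(\theta))$ is garbled: the region $[0,\theta)^m$ includes $h_m$, and the no-transplant term is missing. This is harmless because you work from the recursion. The recursion's base case $W_0$ should be stated separately.
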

\begin{proof}
We define the following events: for fixed $\Delta \theta$,
\begin{align*}
    &A_k=\{h_i(\theta)< \theta \text{ or } h_i(\theta)\geq \theta+\Delta\theta,~i=1,\cdots ,k\},~\forall k,\\
    &B_k=A_k^\sim,
\end{align*}
i.e., $A_k$ is the event that a perturbation of size $\Delta\theta$ doesn't cause a change in the transplant decision until time $k$. Then, we can write
\begin{align*}
    \frac{\partial \E (v_n(\theta))}{\partial\theta}&=\lim_{\Delta \theta\rightarrow 0} \left( \frac{\E((v_n(\theta+\Delta\theta)-v_n(\theta))\1\{A_n\})}{\Delta\theta} + \frac{\E((v_n(\theta+\Delta\theta)-v_n(\theta))\1\{B_n\})}{\Delta\theta} \right),
\end{align*}
where the first term is zero, because the perturbed sample path and the nominal sample path are the same, conditioned on the event $A_n$. We write the term $\E((v_n(\theta+\Delta\theta)-v_n(\theta))\1\{B_n\})$ as
\begin{align*}
    \E(\E((v_n(\theta+\Delta\theta)-v_n(\theta))\1\{B_n\}|h_{M(n-1)})).
\end{align*}
Note that the event $B_n$ is equivalent to the event $\{M(n) \text{ is non-empty},~h_{M(n)}(\theta)\geq  \theta,~h_{M(n)}(\theta)<\theta+\Delta\theta\}$. Then,
\begin{align*}
    \E(v_n(\theta+\Delta\theta)\1\{B_n\}|h_{M(n)-1})&=\E(v_n(\theta+\Delta\theta)|h_{M(n)-1},\1\{B_n\})\P(\1\{B_n\}|h_{M(n)-1})\\
    &= \E(v_n(\theta+\Delta\theta)|h_{M(n)-1},\1\{B_n\})\\  &\times \P(\theta \leq h_{M(n)}<\theta+ \Delta\theta | h_{M(n)-1})\\
    &\leq \frac{\max_h\{c(h),r(h)\}}{1-\lambda} \int_\theta^{\theta+\Delta\theta} f_\HH(h|h_{M(n-1)})dh\\
    &\leq \frac{M\Delta\theta\max_h\{c(h),r(h)\}}{1-\lambda},
\end{align*}
where the first inequality follows from the fact that $v_n(\theta)\leq \sum_{i=0}^\infty\lambda^i \max_h\{c(h),r(h)\}$, and the second inequality follows from \Cref{dens-h-con} that $f_\HH$ is uniformly bounded by $M$. Therefore, $\E(v_n(\theta+\Delta\theta)\1\{B_n\}|h_{M(n-1)})/\Delta\theta$ is uniformly bounded for any $\Delta\theta$. By the dominated convergence theorem,
\begin{align*}
    \lim_{\Delta\theta\rightarrow 0} \frac{\E(v_n(\theta+\Delta\theta)\1\{B_n\})}{\Delta\theta}&=\lim_{\Delta\theta\rightarrow 0} \E\left(\frac{\E(v_n(\theta+\Delta\theta)\1\{B_n\}|h_{M(n)-1})}{\Delta\theta}\right)\\
    &=\E\left(\lim_{\Delta\theta\rightarrow 0} \frac{\E(v_n(\theta+\Delta\theta)\1\{B_n\}|h_{M(n)-1})}{\Delta\theta}\right)\\
    &=\E\left( \lim_{\Delta\theta\rightarrow 0} \frac{\P(\1\{B_n\}|h_{M(n)-1})}{\Delta\theta}\times \lim_{\Delta\theta\rightarrow 0} \E(v_n(\theta+\Delta\theta)|\1\{B_n\},h_{M(n)-1})\right)\\
    &=\E\Biggl( \frac{f_\HH(\theta|h_{M(n)-1})}{\int_\theta^{H} \HH (dh|h_{M(n)-1})}\\ &\times \left(\sum_{i=0}^{M(n)-1}\lambda^i c(h_i)+\lambda^{M(n)}c(\theta) + \E\left(\sum_{i=M(n)+1}^{n}\lambda^i g(h_i,\pi_{\theta}(h_i))|h_{M(n)}=\theta^-\right)\right)\Biggr).
\end{align*}
Similarly, we can derive
\begin{align*}
    \lim_{\Delta\theta\rightarrow 0} \frac{\E(v_n(\theta)\1\{B_n\})}{\Delta\theta} = \E\left( \frac{f_\HH(\theta|h_{M(n)-1})}{\int_\theta^{H} \HH (dh|h_{M(n)-1})}\times \left(\sum_{i=0}^{M(n)-1}\lambda^i c(h_i) +\lambda^{M(n)}r(\theta)\right)\right).
\end{align*}
It follows that
\begin{align*}
    \lim_{\Delta\theta\rightarrow 0} \frac{\E((v_n(\theta+\Delta\theta)-v_n(\theta))\1\{B_n\})}{\Delta\theta}&=\E\Biggl( \frac{f_\HH(\theta|h_{M(n)-1})}{\int_\theta^{H} \HH (dh|h_{M(n)-1})}\\ &\times (\lambda^{M(n)}(c(\theta)-r(\theta)) + \E\left(\sum_{i=M(n)+1}^{n}\lambda^i g(h_i,\pi_{\theta}(h_i))|h_{M(n)}=\theta^-\right)\Biggr).
\end{align*}
Therefore, we conclude that $\E\left(\frac{\partial v_n(\theta)}{\partial \theta}\right)_{SPA} = \frac{\partial \E (v_n(\theta))}{\partial\theta}$.
\end{proof}

\begin{theorem}
Under \Cref{mono-r-con,dens-h-con}, $\left(\frac{\partial v_n(\theta)}{\partial \theta}\right)_{SPA}$ is an asymptotically unbiased estimator, i.e.,
\begin{align*}
    \lim_{n\rightarrow\infty} \E\left(\frac{\partial v_n(\theta)}{\partial \theta}\right)_{SPA} = \frac{\partial V(\theta)}{\partial\theta}.
\end{align*}
\end{theorem}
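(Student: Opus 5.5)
By \Cref{unbias}, $\E\left(\frac{\partial v_n(\theta)}{\partial \theta}\right)_{SPA} = \frac{\partial \E (v_n(\theta))}{\partial\theta}$ for every $n$. So it suffices to show that
\begin{align*}
\lim_{n\rightarrow\infty}\frac{\partial \E(v_n(\theta))}{\partial\theta}=\frac{\partial V(\theta)}{\partial\theta},
\end{align*}
which is an interchange of a limit in $n$ with differentiation in $\theta$. Write $V_n(\theta):=\E(v_n(\theta))$. Since $g$ is nonnegative and bounded by $G:=\max_h\{c(h),r(h)\}$ (\Cref{mono-r-con}), we have
\begin{align*}
|V(\theta)-V_n(\theta)|\le \sum_{k>n}\lambda^k G=\frac{\lambda^{n+1}G}{1-\lambda},
\end{align*}
uniformly in $\theta$. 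Hence $V_n\to V$ uniformly. The standard route is then to show that the derivatives $V_n'(\theta)$ converge uniformly on a neighborhood of $\theta$ (or are uniformly Cauchy). The classical theorem on uniform convergence of derivatives then gives $V'=\lim V_n'$, and this is consistent with the differentiability of $V$ already assumed.

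To control $V_n'$, I would use the explicit representation from the proof of \Cref{unbias}:
\begin{align*}
V_n'(\theta)=\E\left(\frac{f_\HH(\theta|h_{M(n)-1})}{\int_\theta^H \HH(dh|h_{M(n)-1})}\left(\lambda^{M(n)}(c(\theta)-r(\theta))+\E\left(\sum_{i=M(n)+1}^{n}\lambda^i g(h_i,\pi_\theta(h_i))\,\Big|\,h_{M(n)}=\theta^-\right)\right)\right).
\end{align*}
It is cleaner to undo the conditioning, i.e., to multiply the likelihood ratio by the probability of reaching $\{h_{M(n)}\ge\theta\}$ from $h_{M(n)-1}$. This writes $V_n'(\theta)$ as a sum over the first-crossing time $m\le n$:
\begin{align*}
V_n'(\theta)=\sum_{m\le n}\E\left(\1\{h_1,\dots,h_{m-1}<\theta\}\, f_\HH(\theta|h_{m-1})\,D_{m,n}\right),
\end{align*}
where
\begin{align*}
D_{m,n}=\lambda^m(c(\theta)-r(\theta))+\E\left(\sum_{i=m+1}^n\lambda^i g\,\Big|\,h_m=\theta^-\right).
\end{align*}
By \Cref{dens-h-con} each summand is bounded by $M\,\lambda^m G/(1-\lambda)$, because $D_{m,n}$ carries the factor $\lambda^m$.

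Two differences then have to be estimated. First, for fixed $m$, the terms $D_{m,n}$ and $D_{m,n'}$ differ by at most $\lambda^{n+1}G/(1-\lambda)$. Second, the tail $\sum_{m>n}$ is bounded by $MG\lambda^{n+1}/(1-\lambda)^2$. Together these give
\begin{align*}
|V_n'(\theta)-V_{n'}'(\theta)|\le C\lambda^{\min(n,n')}
\end{align*}
with $C$ independent of $\theta$. So $V_n'$ is uniformly Cauchy and converges uniformly to some limit $\psi$. Combined with the uniform convergence $V_n\to V$, this yields $V'(\theta)=\psi(\theta)=\lim_n \E\left(\frac{\partial v_n(\theta)}{\partial \theta}\right)_{SPA}$.

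The main obstacle is the rigorous justification of this uniform bound, in particular making sure the likelihood-ratio denominator $\int_\theta^H\HH(dh|h_{M(n)-1})$ does not blow up. Undoing the conditioning, as above, removes that denominator, which is why I would take that route. A secondary point is that the uniform-convergence-of-derivatives theorem needs each $V_n$ to be differentiable in a neighborhood of $\theta$. That follows from \Cref{unbias} applied at every $\theta$ in the neighborhood, together with continuity of $f_\HH$ in its first argument if needed. If that continuity is unavailable, I would instead argue through the mean value theorem applied to the differences $V_n-V_{n'}$.
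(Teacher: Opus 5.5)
Your argument is correct in substance and follows the paper's overall structure. \Cref{unbias} reduces the claim to interchanging $\lim_n$ with $\partial/\partial\theta$. The derivatives $\partial\E(v_n(\theta))/\partial\theta$ are then shown to be uniformly Cauchy in $\theta$ using the discount factor and the density bound $M$. Finally, the uniform-convergence-of-derivatives theorem (the paper cites Theorem 8.2.3 of Bartle and Sherbert) finishes the proof.

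The difference is in how the key estimate is obtained. The paper telescopes in the horizon, $\partial_\theta\E v_n-\partial_\theta\E v_{n-1}=\lambda^n\partial_\theta\E g(h_n,\pi_\theta(h_n))$, and bounds this one-period derivative through the event $B_n$ conditioned on $h_{M(n)-1}$. You instead decompose over the first-crossing time $m$ with the conditioning undone. The paper's route is shorter. Yours is more transparent: it removes the denominator $\int_\theta^H\HH(dh|h_{M(n)-1})$, which the paper's bound $\P(B_n|h_{M(n)-1})\le M\Delta\theta$ does not account for, and it makes explicit that a decision can flip at any of the periods $1,\dots,n$.

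That last point means your final bound is slightly off. In your first difference, each of the $n$ indices $m\le n$ carries weight at most $M$, so that term is at most $nMG\lambda^{n+1}/(1-\lambda)$. What you actually obtain for $n<n'$ is $|V_n'(\theta)-V_{n'}'(\theta)|\le C(n+1)\lambda^{n}$, not $C\lambda^{\min(n,n')}$. Since $C$ is still independent of $\theta$ and $(n+1)\lambda^n\to0$, the uniform Cauchy property and your conclusion are unaffected. The paper's bound $2M\max_h\{c(h),r(h)\}$ on the one-period derivative should likewise carry a factor of order $n$, for the same reason, and it is harmless there as well.
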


\begin{proof}
Note that $\lim_{n\rightarrow\infty} \E(v_n(\theta)) = \E (\lim_{n\rightarrow\infty}v_n(\theta))=V(\theta)$ because the sequence of random variables $\{v_n\}_{n\in\N}$ is uniformly bounded and converges pointwise. Since we already proved $\E\left(\frac{\partial v_n(\theta)}{\partial \theta}\right)_{SPA} = \frac{\partial \E (v_n(\theta))}{\partial\theta}$ in \Cref{unbias}, it remains to show that
\vspace*{-18pt}
$$
~~~~~~~~~~~~~~~~~~~~~~
    \lim_{n\rightarrow\infty} \frac{\partial \E (v_n(\theta))}{\partial\theta} = \frac{\partial \lim_{n\rightarrow\infty} \E (v_n(\theta))}{\partial\theta}.
$$
i.e., passing a derivative through a limit. By Theorem 8.2.3 in \shortciteN{bartle2000introduction}, it suffices to show that $\left(\frac{\partial \E (v_n(\theta))}{\partial\theta}\right)_{n\in\N}$ is a uniformly convergent sequence on $S_H$. Note that
\begin{align*}
    \biggl|\frac{\partial \E v_n(\theta)}{\partial\theta}-\frac{\partial \E v_{n-1}(\theta)}{\partial\theta}\biggr| &= \lambda^{n} \biggl|\frac{\partial \E g(h_{n},\pi_\theta(h_{n}))}{\partial\theta}\biggr|. 
\end{align*}
It suffices to show that $\biggl|\frac{\partial \E g(h_{n},\pi_\theta(h_{n}))}{\partial\theta}\biggr|$ is bounded, which can be shown similarly as in \Cref{unbias}. We write
\begin{align*}
    \frac{\partial \E g(h_{n},\pi_\theta(h_{n}))}{\partial\theta}&=\lim_{\Delta\theta\rightarrow 0}\frac{\E g(h_{n}(\theta+\Delta\theta),\pi_\theta(h_{n}(\theta+\Delta\theta))) - \E g(h_{n}(\theta),\pi_\theta(h_{n}(\theta)))}{\Delta\theta}.
\end{align*}
Similar to the proof of \Cref{unbias}, we can write $\E g(h_{n}(\theta),\pi_\theta(h_{n}(\theta)))$ as
\begin{align*}
    \E g(h_{n}(\theta),\pi_\theta(h_{n}(\theta))) &= \E (\E(g(h_{n}(\theta),\pi_\theta(h_{n}(\theta)))\1\{B_n\}|h_{M(n)-1})), \mbox{~where}  \\
    \E(g(h_{n}(\theta),\pi_\theta(h_{n}(\theta)))\1\{B_n\}|h_{M(n)-1})&=\P(1\{B_n\}|h_{M(n)-1}) \E(g(h_{n}(\theta),\pi_\theta(h_{n}(\theta)))|\1\{B_n\},h_{M(n)-1})\\
    &\leq \max_h\{c(h),r(h)\} \int_\theta^{\theta+\Delta\theta} f_\HH(h|h_{M(n)-1})dh\\
    &\leq \Delta\theta M \max_h\{c(h),r(h)\}.
\end{align*}
Similarly, $\E g(h_{n}(\theta+\Delta\theta),\pi_\theta(h_{n}(\theta+\Delta\theta)))\leq \Delta\theta M \max_h\{c(h),r(h)\}$. It follows that
$$
\biggl|\frac{\partial \E g(h_{n},\pi_\theta(h_{n}))}{\partial\theta}\biggr|
\leq \hspace*{-3pt} \lim_{\Delta\theta\rightarrow 0} \hspace*{-3pt} 
\frac{\E g(h_{n}(\theta+\Delta\theta),\pi_\theta(h_{n}(\theta+\Delta\theta))) + \E g(h_{n}(\theta),\pi_\theta(h_{n}(\theta)))}{\Delta\theta}
\hspace*{-3pt} =2M \max_h\{c(h),r(h)\},
$$
which is independent of $\theta$. Therefore, $\biggl|\frac{\partial \E v_n(\theta)}{\partial\theta}-\frac{\partial \E v_{n-1}(\theta)}{\partial\theta}\biggr|$ converges uniformly to zero, and $\left(\frac{\partial \E (v_n(\theta))}{\partial\theta}\right)_{n\in\N}$ is a uniformly convergent sequence on $S_H$. Thus we can pass the derivative through the limit.
\end{proof}

\section{Simulation example}\label{sec:num}
In this section, we present a simple simulation example to demonstrate the performance of the SPA estimator. We consider an MDP where the decision period is half a year, and the patient state $h_n$ takes values in $S_H=[0,1]$. Suppose that the control limit policy $\pi_\theta$ is implemented for some $\theta\in(0,1)$. For any $n$, 
conditioned on $h_n\in[0,\theta]$, $h_{n+1}$ is uniformly distributed over $[h_n,1]$, i.e., $f_\HH(h'|h)=\1\{1\geq h'\geq h\}/(1-h),~\forall h,h'\in[0,1]$. Therefore, the patient health never improves, and it is straightforward to check that $f_\HH$ satisfies the IFR property. The rewards are defined in terms of expected life years. We define the intermediate pre-transplantation reward $c(h)\equiv0.5$ (years), and the terminal post-transplantation reward function $r(h)=8(1-h)$ (years). Then, the SPA estimator is given by
\begin{align*}
    \left(\frac{\partial v_n(\theta)}{\partial \theta}\right)_{SPA}&=\frac{f_\HH(\theta|h_{M(n)-1})}{\int_\theta^{H} \HH (dh|h_{M(n)-1})} \left(\lambda^{M(n)}(r(\theta)-c(\theta)) - \E\left(\sum_{i=M(n)+1}^{n}\lambda^i g(h_i,\pi_\theta(h_i))|h_{M(n)}=\theta^-\right)\right)\\
    &=\frac{1}{1-\theta} (\lambda^{M(n)}(r(\theta)-c(\theta)) - \E(\lambda^{M(n)+1} r(h_{M(n)+1})|h_{M(n)}=\theta^-))\\
    &=\frac{1}{1-\theta} (\lambda^{M(n)}(8-8\theta-0.5) - \E(\lambda^{M(n)+1} 8(1-h_{M(n)+1})|h_{M(n)}=\theta^-)),
\end{align*}
which we compare with the symmetric finite difference (FD) estimator
\vspace*{-4pt}
$$
    \left(\frac{\partial v_n(\theta)}{\partial \theta}\right)_{FD}=\frac{v_n(\theta+\frac{\delta}{2})-v_n(\theta-\frac{\delta}{2})}{\delta},
$$
where $\delta$ is the size of the symmetric difference. We compute both derivative estimators at $\theta=0.2,0.5,0.8$ and test with $\delta=0.01,0.05,0.1$ and number of replications $N = 10^2,10^4,10^6$. Simulation results are shown in \Cref{resl} and \Cref{figure}. We have the following observations:
\vspace*{-5pt}
\begin{itemize}
    \item For a small number of replications, SPA has much smaller bias and standard error (SE) than FD.
    \item FD at $\delta=0.1$ has a large bias that can be reduced at the expense of variance.
    \item FD at $\delta=0.01$ is almost unbiased but has much larger variance than SPA.
    \item For a fixed number of replications, the standard error of the SPA estimator is almost identical at different $\theta$, whereas the precision of the FD estimator is proportional to the derivative.
\end{itemize}
\vspace*{-14pt}

\begin{table}[t]
\setlength{\belowcaptionskip}{-.01pt}
\setlength\extrarowheight{-.01pt}
\centering
\caption{\baselineskip8pt Simulation results for sensitivity of value function $V(\theta)$ w.r.t. $\theta$ (standard errors in parentheses).}
\label{resl}
\begin{tabular}{c|c|c|c|c|c}
$N$ & $\theta$ & SPA & FD($\delta=0.01$) & FD($\delta=0.05$) & FD($\delta=0.1$) \\ \hline
                   & $0.2$ &$-3.199(0.242)$     &$-8.251(5.394)$    &$-4.786(1.811)$    &$-2.065(0.871)$    \\ \cline{2-6} 
$10^2$                & $0.5$ &$-2.668(0.233)$     &$-5.523(3.182)$   &$-2.560(1.076)$    &$-3.512(0.780)$    \\ \cline{2-6} 
                   & $0.8$ &$-1.313(0.225)$     &$-3.083(1.411)$    &$-1.326(0.552)$    &$-0.205(0.242)$    \\ \hline
                   & $0.2$ &$-3.371(0.023)$     &$-3.281(0.349)$    &$-3.503(0.155)$    &$-3.253(0.104)$    \\ \cline{2-6} 
$10^4$               & $0.5$ &$-2.997(0.023)$     &$-3.120(0.265)$    &$-2.991(0.109)$    &$-2.920(0.071)$    \\ \cline{2-6} 
                   & $0.8$ &$-1.515(0.022)$     &$-1.306(0.114)$    &$-1.144(0.043)$    &$-0.527(0.028)$    \\ \hline
                   & $0.2$ &$-3.403(0.002)$     &$-3.446(0.036)$    &$-3.346(0.016)$    &$-3.310(0.010)$    \\ \cline{2-6} 
$10^6$              & $0.5$ &$-3.019(0.002)$     &$-2.948(0.026)$    &$-2.945(0.011)$    &$-2.867(0.007)$    \\ \cline{2-6} 
                   & $0.8$ &$-1.517(0.002)$     &$-1.413(0.011)$    &$-1.106(0.004)$    &$-0.527(0.003)$   
\end{tabular}
\end{table}

\begin{figure}[b]
\captionsetup[subfigure]{aboveskip=-.2pt,belowskip=-.2pt}
     \centering
     \begin{subfigure}[b]{0.36\textwidth}
         \centering
         \includegraphics[width=\textwidth]{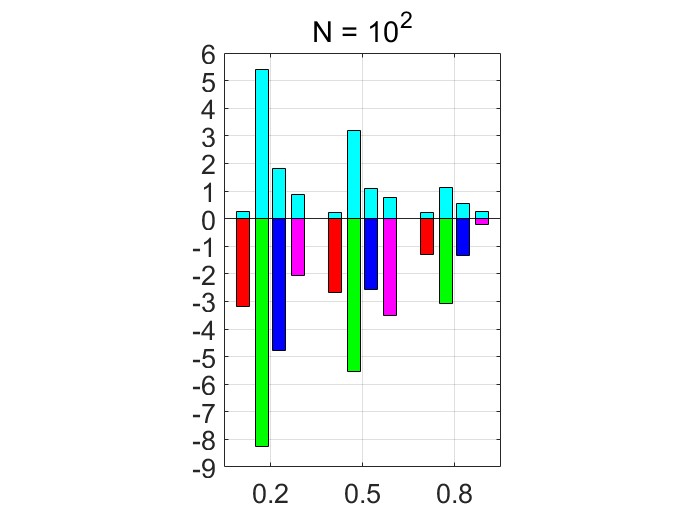}
     \end{subfigure}
     \hspace*{-4.3em}
     \begin{subfigure}[b]{0.36\textwidth}
         \centering
         \includegraphics[width=\textwidth]{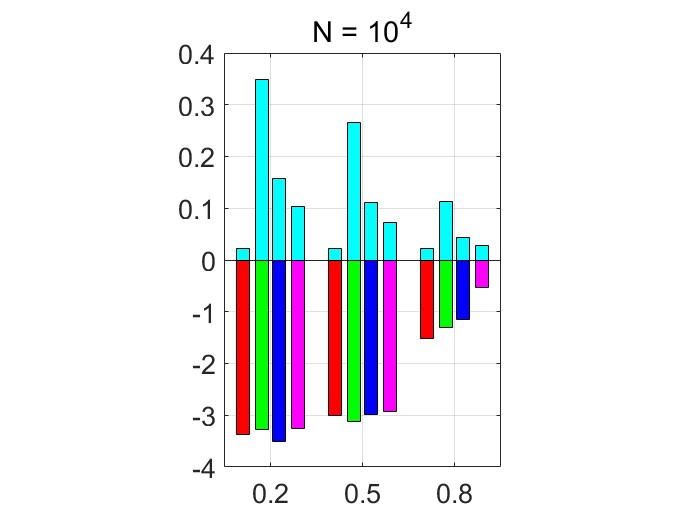}
     \end{subfigure}
     \hspace*{-2.5em}
     \begin{subfigure}[b]{0.36\textwidth}
         \centering
         \includegraphics[width=\textwidth]{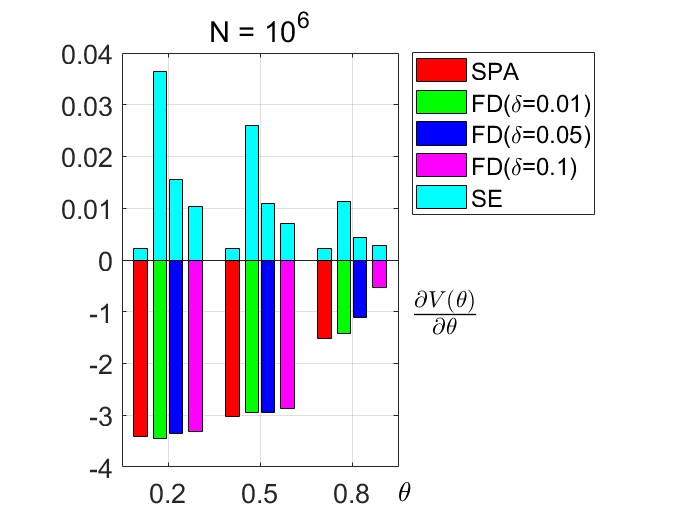}
     \end{subfigure}
        \caption{\baselineskip8pt Simulation results for the sensitivity of the value function $V(\theta)$ and their standard errors (SEs).} \label{figure}
\end{figure}

\section{Summary and future research}\label{sec:conc}
We proposed a continuous-state MDP model to study the optimal timing of organ transplantation. Under suitable conditions, we proved that there exists a control limit optimal policy. We derived an SPA estimator for the gradient of the value function w.r.t the control limit, which is useful in computing the optimal control limit by gradient-based simulation optimization. Furthermore, we proved that the SPA estimator is asymptotically unbiased and demonstrated its effectiveness using a simulation example. Solving for the optimal control limit through gradient-based optimization methods will be the focus of future research. Another future research direction is to consider the situation where the donor organ's quality and availability may vary over time. Finally, because implementing the SPA estimator requires additional simulation beyond the nominal sample path, comparing it with other unbiased gradient estimators, for example, the generalized likelihood ratio (GLR) method proposed in \shortciteN{peng2018new} and the measure-valued differentiation (MVD) method in \shortciteN{heidergott2023gradient}, is an important topic warranting further investigation.

\appendix

\footnotesize

\bibliographystyle{wsc}

\bibliography{demobib}

@article{heidergott2023gradient,
  title={Gradient Estimation for Smooth Stopping Criteria},
  author={Heidergott, Bernd and Peng, Yijie},
  journal={Advances in Applied Probability},
  volume={55},
  number={1},
  pages={29--55},
  year={2023},
  publisher={Cambridge University Press}
}

@book{bartle2000introduction,
  title={Introduction to Real Analysis},
  author={Bartle, Robert G and Sherbert, Donald R},
  year={2010},
  publisher={John Wiley \& Sons},
  edition={4th},
  address = {New York}
}

@article{ren2022optimal,
  title={Optimal Acceptance of Incompatible Kidneys},
  author={Ren, Xingyu and Fu, Michael C and Marcus, Steven I},
  journal={arXiv preprint arXiv:2212.01808},
  year={2022}
}

@article{prieto2003problems,
  title={Problems and Solutions in Calculating Quality-Adjusted Life Years (QALYs)},
  author={Prieto, Luis and Sacrist{\'a}n, Jos{\'e} A},
  journal={Health and Quality of Life Outcomes},
  volume={1},
  pages={1--8},
  year={2003},
  publisher={Springer}
}

@book{ross1996stochastic,
  title={Stochastic Processes},
  author={Ross, Sheldon M},
  year={1996},
  publisher={John Wiley \& Sons},
  address = {New York},
  edition = {2nd}
}

@article{peng2018new,
  title={A New Unbiased Stochastic Derivative Estimator for Discontinuous Sample Performances With Structural Parameters},
  author={Peng, Yijie and Fu, Michael C and Hu, Jian-Qiang and Heidergott, Bernd},
  journal={Operations Research},
  volume={66},
  number={2},
  pages={487--499},
  year={2018},
  publisher={INFORMS}
}

@book{fu2012conditional,
  title={Conditional Monte Carlo: Gradient Estimation and Optimization Applications},
  author={Fu, Michael C and Hu, Jian-Qiang},
  year={1997},
  publisher={Kluwer Academic},
  address = {Boston}
}

@article{david1985,
  title={A Time-Dependent Stopping Problem With Application to Live Organ Transplants},
  author={David, Israel and Yechiali, Uri},
  journal={Operations Research},
  volume={33},
  number={3},
  pages={491--504},
  year={1985},
  publisher={INFORMS}
}

@article{alagoz2004optimal,
  title={The Optimal Timing of Living-Donor Liver Transplantation},
  author={Alagoz, Oguzhan and Maillart, Lisa M and Schaefer, Andrew J and Roberts, Mark S},
  journal={Management Science},
  volume={50},
  number={10},
  pages={1420--1430},
  year={2004},
  publisher={INFORMS}
}

@article{alagoz2010markov,
  title={Markov Decision Processes: {A} Tool for Sequential Decision Making Under Uncertainty},
  author={Alagoz, Oguzhan and Hsu, Heather and Schaefer, Andrew J and Roberts, Mark S},
  journal={Medical Decision Making},
  volume={30},
  number={4},
  pages={474--483},
  year={2010},
  publisher={SAGE Publications Sage CA: Los Angeles, CA}
}

@article{alagoz2007determining,
  title={Determining the Acceptance of Cadaveric Livers Using an Implicit Model of the Waiting List},
  author={Alagoz, Oguzhan and Maillart, Lisa M and Schaefer, Andrew J and Roberts, Mark S},
  journal={Operations Research},
  volume={55},
  number={1},
  pages={24--36},
  year={2007},
  publisher={INFORMS}
}

@article{batun2018optimal,
  title={Optimal Liver Acceptance for Risk-Sensitive Patients},
  author={Batun, Sakine and Schaefer, Andrew J and Bhandari, Atul and Roberts, Mark S},
  journal={Service Science},
  volume={10},
  number={3},
  pages={320--333},
  year={2018},
  publisher={INFORMS}
}

@book{bertsekas2020dynamic,
  title={Dynamic Programming and Optimal Control},
  author={Bertsekas, Dimitri P},
  edition = {4th},
  volume={1},
  publisher={Athena Scientific},
  address={Belmont, MA},
  year={2020}
}

@article{douer1994optimal,
  title={Optimal Repair and Replacement in {Markovian} Systems},
  author={Douer, Nir and Yechiali, Uri},
  journal={Stochastic Models},
  volume={10},
  number={1},
  pages={253--270},
  year={1994},
  publisher={Taylor \& Francis}
}

@book{hernandez2012discrete,
  title={Discrete-time {Markov} Control Processes: {Basic} Optimality Criteria},
  author={Hern{\'a}ndez-Lerma, On{\'e}simo and Lasserre, Jean B},
  year={1996},
  address = {New York},
  publisher={Springer}
}

@article{fan2020optimal,
  title={Optimal Treatment of Chronic Kidney Disease With Uncertainty in Obtaining a Transplantable Kidney: {An} {MDP}-Based Approach},
  author={Fan, Wenjuan and Zong, Yang and Kumar, Subodha},
  journal={Annals of Operations Research},
  volume={316},
  number={11},
  pages={269--302},
  year={2020},
  publisher={Springer}
}

@article{alagoz2007choosing,
  title={Choosing Among Living-Donor and Cadaveric Livers},
  author={Alagoz, Oguzhan and Maillart, Lisa M and Schaefer, Andrew J and Roberts, Mark S},
  journal={Management Science},
  volume={53},
  number={11},
  pages={1702--1715},
  year={2007},
  publisher={INFORMS}
}

@article{bendersky2016deciding,
  title={Deciding Kidney-Offer Admissibility Dependent on Patients’ Lifetime Failure Rate},
  author={Bendersky, Michael and David, Israel},
  journal={European Journal of Operational Research},
  volume={251},
  number={2},
  pages={686--693},
  year={2016},
  publisher={Elsevier}
}

@article{kaufman2017living,
  title={Living-Donor Liver Transplantation Timing Under Ambiguous Health State Transition Probabilities},
  author={Kaufman, David and Schaefer, Andrew J and Roberts, Mark S},
  journal={Available at SSRN 3003590},
  year={2017}
}

\section*{AUTHOR BIOGRAPHIES}

\noindent {\bf XINGYU REN} is a Ph.D. student in the Department of Electrical and Computer Engineering at the University of Maryland, College Park. His research interests include stochastic optimization and Markov decision processes. His e-mail address is \email{renxy@umd.edu}.\\

\noindent {\bf MICHAEL C. FU} holds the Smith Chair of Management Science in the Robert H. Smith School of Business, with a joint appointment in the Institute for Systems Research and an affiliate appointment in the Department of Electrical and Computer Engineering, at the University of Maryland, College Park. His research interests include stochastic gradient estimation, simulation optimization, and applied probability. He served as WSC2011 Program Chair and received the INFORMS Simulation Society's Distinguished Service Award in 2018. He is a Fellow of INFORMS and IEEE. His e-mail address is \email{mfu@umd.edu}.\\

\noindent {\bf STEVEN I. MARCUS} is Professor Emeritus in the Department of Electrical and Computer Engineering and the Institute for Systems Research, University of Maryland.  He is former Editor-in-Chief of the SIAM Journal on Control and Optimization. His research is focused on stochastic control and estimation, Markov decision processes, and hybrid systems. He is a Fellow of IEEE and SIAM. His email address is \email{marcus@umd.edu}.\\

\end{document}